\documentclass[11pt]{article}
\usepackage[english]{babel}
\usepackage{amsmath}
\usepackage{amsfonts}
\usepackage{amssymb}
\usepackage{mathrsfs}
\usepackage{amsthm}
\usepackage{graphicx}
\usepackage{comment}

\theoremstyle{definition}
\newtheorem{definition}{Definition}[section]

\theoremstyle{remark}
\newtheorem{remark}{Remark}[section]

\theoremstyle{plain}
\newtheorem{theorem}{Theorem}[section]
\newtheorem{proposition}{Proposition}[section]
\newtheorem{corollary}{Corollary}[section]
\newtheorem{lemma}{Lemma}[section]

\numberwithin{equation}{section}

\def\be{\begin{equation}}
\def\ee{\end{equation}}

\newcommand{\E}{\mathbb{E}}
\newcommand{\R}{\mathbb{R}}
\newcommand{\C}{\mathbb{C}}
\newcommand{\Rpz}{\R_{+}}

\newcommand{\qp}{q_{L_p}}
\newcommand{\qpp}{q_{L_{p+1}}}

\newcommand{\va}{a}
\newcommand{\vlambda}{\lambda}
\newcommand{\betap}{\beta_p}
\newcommand{\betapm}{\beta_{p-1}}
\newcommand{\betauno}{\beta_1}
\newcommand{\betaKm}{\beta_{K-1}}
\newcommand{\tp}{t_p}

\DeclareMathOperator{\DDelta}{\Delta}
\DeclareMathOperator{\pSK}{p^\textup{SK}}
\DeclareMathOperator{\qSK}{q^\textup{SK}}
\DeclareMathOperator{\qSKRS}{q^\textup{RS-SK}}

\DeclareMathOperator{\pSKN}{p^\textup{SK}_N}
\DeclareMathOperator{\pSKNp}{p^\textup{SK}_{N_p}}
\DeclareMathOperator{\pASK}{p^\textup{SK-A}}

\DeclareMathOperator{\PDBM}{\mathcal{P}^\textup{DBM}}
\DeclareMathOperator{\pDBMN}{p^\textup{DBM}_{\Lambda_N}}
\DeclareMathOperator{\pADBM}{p^\textup{DBM-A}}
\DeclareMathOperator{\pRS}{\mathcal{P}^\textup{RS-DBM}}
\DeclareMathOperator{\pSKRS}{\mathcal{P}^\textup{RS-SK}}
\DeclareMathOperator{\pRSN}{\mathcal{P}^\textup{RS-DBM}_{\Lambda_N}}

\DeclareMathOperator{\Jac}{Jac}
\DeclareMathOperator{\diag}{diag}
\DeclareMathOperator{\sign}{sign}

\newcommand{\HSKN}{H^\textup{SK}_N}
\newcommand{\HSKp}{H^\textup{SK}_{L_p}}
\newcommand{\HSKu}{H^\textup{SK}_{L_1}}
\newcommand{\HSKK}{H^\textup{SK}_{L_K}}
\newcommand{\ZSKN}{Z^\textup{SK}_N}

\begin{document}

\title{Deep Boltzmann machines: rigorous results at arbitrary depth.}

\author{Diego Alberici, Pierluigi Contucci, Emanuele Mingione}

\maketitle

\begin{abstract}
A class of deep Boltzmann machines is considered in the simplified framework of a quenched system with Gaussian noise and independent entries. The quenched pressure of a $K$-layers spin glass model is studied allowing interactions only among consecutive layers. A lower bound for the pressure is found in terms of a convex combination of $K$ Sherrington-Kirkpatrick models and used to study the annealed and replica symmetric regimes of the system. A map with a one dimensional monomer-dimer system is identified and used to rigorously control the annealed region at arbitrary depth $K$ with the methods introduced by Heilmann and Lieb. The compression of this high noise region displays a remarkable phenomenon of localisation of the processing layers.
Furthermore a replica symmetric lower bound for the limiting quenched pressure of the model is obtained in a suitable region of the parameters and the replica symmetric pressure is proved to have a unique stationary point.
\end{abstract}

\textbf{Keywords:} multi-layer spin glasses, deep Boltzmann machines, monomer-dimer systems.

\section{Introduction and results}

The mean-field setting in Statistical Mechanics corresponds to the invariance of an $N$ particles system under the permutation group action. When this condition is weakened to permutation invariance within each set of a $K$-partition of the system $\big(\sum_{p=1}^{K}N_p=N\big)$, a homogeneous model generalizes to its $K$-populated version.
This generalization has been considered in spin systems for both non-random interactions, i.e. the Curie-Weiss model \cite{CGallo,CFedele}, and random interactions, i.e. the Sherrington-Kirkpatrick model \cite{BCMT,PanchenkoMSK}.
For the first case a complete control of the thermodynamic properties has been reached for general values of the interaction parameters. In the random case instead only the so called elliptic structure of the interactions is fully controlled, while the hyperbolic one is still not understood. We mention that the case $K=2$ has already been solved in two  particular frameworks characterized by replica symmetry: on the Nishimori line \cite{BMM} or with spherical spins \cite{Auchen,baik}. 

In this paper we continue the analysis started in \cite{ABCM,bipartiti} concerning a mean-field spin glass with pure hyperbolic structure of the interactions, i.e. a random version of deep Boltzmann machines [DBM] over $K$ layers \cite{Hinton1}. 
The framework of \cite{ABCM} is generalized by dealing with a general number $K$ of layers and by allowing local (layer dependent) temperatures.
A lower bound for the quenched pressure in terms of $K$ Sherringhton-Kirkpatrick models [SK] coupled in temperature along a linear chain is obtained and used to study the annealed and replica symmetric regimes of the random DBM in the large volume limit.

Our first result is a control of the annealed region $A_K$ in terms of the largest zero of a matching polynomial which -up to a change of variable in the complex plane- is the partition function of a monomer-dimer system over the linear chain of length $K$ \cite{HL,HLprl}. This region $A_K$ turns out to be exactly the one where the annealed solution $q=0$ is stable for the replica symmetric consistency equation.
The compression of the annealed region leads to a peculiar structure of the layers: in particular the extensive layers are localized along a chain of length two or three.

A replica symmetric lower bound for the quenched pressure is obtained in a suitable region of the parameters. In the case of Gaussian external fields this region is identified by a $K$-dimensional version of the Almeida-Thouless condition for SK. Within this framework the replica symmetric consistency equation is proved to have a unique solution on the whole space of parameters. It is important to mention that the uniqueness for the elliptic case \cite{PanchenkoMSK} is still an open problem.

The paper is organised as follows.
Section \ref{definizioni} introduces the model.
In Section \ref{lb} we provide a lower bound for the quenched pressure of the DBM in terms of an interacting variational principle. 
In Section \ref{sec:ann} we identify and study a region where the quenched and the annealed pressure of the DBM coincide.
In Section \ref{Sec:RS} we derive the replica symmetric functional for the DBM and we study its stationary point(s). 
In Section \ref{sec:rs bound} we provide a lower bound for the quenched pressure of the DBM in terms of the previous replica symmetric functional under suitable conditions on the parameters of the model. 
Appendix \ref{sec:appendix} contains properties of the matching polynomials zeros, which are useful to characterize the annealed region in Section \ref{sec:ann} and are mainly due to Heilmann and Lieb \cite{HL}.

\section{Definitions} \label{definizioni}
Consider $N$ spin variables $\sigma=(\sigma_i)_{i=1,\dots,N}\in\{-1,1\}^N$ arranged over $K$ layers $L_1,\dots,L_K$ of cardinality $N_1,\dots,N_K$ respectively, so that $\sum_{p=1}^K N_p=N\,$.
Assume that the relative sizes of the layers converge in the large volume limit:
\be\label{lambdadef}
\lambda_p^{(N)} \equiv\, \frac{N_p}{N} \,\xrightarrow[N\to\infty]{}\, \lambda_p \,\in[0,1]
\ee
for every $p=1,\dots,K\,$.
We denote $\Lambda_N=(L_p)_{p=1,\dots,K}\,$, 
$\lambda^{(N)}=\big(\lambda_p^{(N)}\big)_{p=1,\dots,K}$ and $\vlambda=(\lambda_p)_{p=1,\dots,K}$. Clearly $\sum_{p=1}^K\lambda_p=1\,$.


Let $J_{ij}$ for $(i,j)\in L_p\times L_{p+1}$ and $p=1,\dots,K-1\,$ be a family of i.i.d. standard Gaussian random variables coupling spins in two consecutive layers.
We introduce a vector of positive inverse temperatures tuning the interactions among consecutive layers $\beta=(\betap)_{p=1,\dots,K-1}\in\R_+^{K-1}\,$.

Let $h_i$ for $i\in L_p$ and $p=1,\dots,K$ be a family of independent real random variables, independent also of the $J_{ij}$'s, acting as external fields on the spins. Assume that $(h_i)_{i\in L_p}$ are i.i.d. copies of a random variable $h^{(p)}$ such that $\E|h^{(p)}|<\infty\,$. We denote $h=(h^{(p)})_{p=1,\dots,K}\,$.

\begin{definition}
The Hamiltonian of the random Deep Boltzmann Machine [DBM] is
\be \label{eq:H}
H_{\Lambda_N}(\sigma) \,\equiv\, -\frac{\sqrt{2}}{\sqrt{N}}\; \sum_{p=1}^{K-1}\,\betap \!\!\sum_{(i,j)\in L_p\times L_{p+1}}\!\!\!\!\! J_{ij}\, \sigma_i\sigma_j
\ee
for every spin configuration $\sigma\in\{-1,1\}^N\,$.
\end{definition}

\begin{definition}
Given two spin configurations $\sigma,\tau\in\{-1,1\}^N$, for every $p=1,\ldots,K$ we define their overlap over the layer $L_p$ as
\be
\qp(\sigma,\tau) \,\equiv\, \frac{1}{N_p}\,\sum_{i\in L_p} \sigma_i\,\tau_i \;\in[-1,1] \;.
\ee
\end{definition}

\begin{remark}
The covariance matrix of the centred Gaussian process $H_{\Lambda_N}$ is
\be\label{eq:cov H}
\E \,H_{\Lambda_N}(\sigma)\, H_{\Lambda_N}(\tau) \,=\,
N\, q_{\Lambda_N}(\sigma,\tau)^T\, M_1^{(N)}\, q_{\Lambda_N}(\sigma,\tau)
\ee
for every $\sigma,\tau\in\{-1,1\}^N$. Here we set $q_{\Lambda_N}(\sigma,\tau) \equiv \big(q_{L_p}(\sigma,\tau)\big)_{p=1,\dots,K}\;$,
\be \label{M1}
M_1(\beta,\lambda) \,\equiv\, \diag(\lambda)\,M_0(\beta)\, \diag(\lambda)\;,\vspace{-4pt}
\ee
\be \label{M0}
M_0(\beta) \,\equiv\,
\begin{pmatrix}
0 					& \beta_1^2				& 	 					&					& 				    &               \\[2pt]
\beta_1^2			& 0 					& \beta_2^2				& 					& 				    &               \\[2pt]
					& \beta_2^2				& 0						&  					& 				    &               \\
					& 						& 						& \ddots 			& 				    &               \\
					&						&						&					&				    & \beta_{K-1}^2 \\
					&						&						&                   & \beta_{K-1}^2\!   & 0 \\
\end{pmatrix}
\ee
and we denote $M_1^{(N)}\equiv M_1(\beta,\lambda^{(N)})\,$.
Notice that $M_0(\beta)$ can be interpreted as a weighted adjacency matrix for the layers structure of the DBM.
\end{remark}


\begin{definition}
The random partition function of the model introduced by Hamiltonian (\ref{eq:H}) is
\be
Z_{\Lambda_N} \,\equiv\, \sum_{\sigma\in\{-1,1\}^N} \exp\Bigg(-H_{\Lambda_N}(\sigma)\,+\,\sum_{p=1}^K\sum_{i\in L_p}h_i\,\sigma_i\Bigg)
\ee
and its quenched pressure density is
\be\label{qpres}
\pDBMN \,\equiv\, \frac{1}{N}\,\E\,\log Z_{\Lambda_N}
\ee
where $\E$ denotes the expectation over all the couplings $J_{ij}\,$'s and the external fields $h_i$'s.
\end{definition}


\section{A lower bound for the quenched pressure of the DBM}\label{lb}

In this section we give an explicit bound for the quenched pressure of the $K$ layers DBM in terms of $K$ independent Sherrington-Kirkpatrick spin-glasses [SK] \cite{Panchenko-Book,Tala,CG}.

Considering $N$ spin variables $\sigma_i$, $i=1,\dots,N$, we recall that the Hamiltonian of the SK model is
\be \label{eq:HSK}
\HSKN(\sigma) \,\equiv\, -\frac{1}{\sqrt{N}}\; \sum_{i,j=1}^N \tilde J_{ij}\, \sigma_i\sigma_j
\ee
where $\tilde J_{ij}$, $i,j=1,\dots,N$ is a family of i.i.d. standard Gaussian random couplings.
Given two spin configurations $\sigma,\tau\in\{-1,1\}^N$, their overlap is
\be
q_N(\sigma,\tau) \,\equiv\, \frac{1}{N}\,\sum_{i=1}^N \sigma_i\,\tau_i \;\in[-1,1]
\ee
and the covariance matrix of the Gaussian process $\HSKN$ is:
\be\label{eq:cov HSK}
\E \,\HSKN(\sigma)\, \HSKN(\tau) \,=\, N\, q_N(\sigma,\tau)^2 \;.
\ee
Given an inverse temperature $\beta>0$, the random partition function of the SK model is
\be
\ZSKN \,\equiv\, \sum_{\sigma\in\{-1,1\}^N} \exp\Bigg( -\beta\,\HSKN(\sigma) \,+\, \sum_{i=1}^N \tilde h_i\,\sigma_i\Bigg)
\ee
where $\tilde h_i$, $i=1,\dots,N$ is a family of i.i.d. copies of a random variable $h$ such that $\E|h|<\infty\,$.
The quenched pressure density of the SK model is
\be\label{qpresSK}
\pSKN(\beta,h) \,\equiv\, \frac{1}{N}\,\E\,\log \ZSKN
\ee
where $\E$ to denote the expectation over all couplings $\tilde J_{ij}$'s and fields $\tilde h_i$'s.
The quenched pressure converges as $N\to\infty$ and many properties of its limit, that we will denote by $\pSK(\beta,h)\,$, have been investigated in the literature \cite{MPV,GT,Guerra,Tala,Panchenko-Book,Auchen1}.


\begin{theorem}\label{maint}
The quenched pressure of the DBM satisfies the following lower bound:
\be\label{main2}
\liminf_{N\to\infty}\pDBMN \,\geq
\sup_{\va\in\Rpz^{K-1}} \PDBM(\va) \;,
\ee
where, for every $\va = (a_p)_{p=1,\dots,K-1}\in\R_+^{K-1}$, the functional $\PDBM(\va)=\PDBM(\va;\,\beta,\lambda,h)$ is defined as:
\be\label{variationalp}
\PDBM(\va) \,\equiv\, 
\sum_{p=1}^K \lambda_p\, \pSK\!\big(\theta_p(\va),h^{(p)}\big) \,-\,
\frac{1}{2}\,\sum_{p=1}^K \lambda_p\,\theta_p(\va)^2 \,+ \sum_{p=1}^{K-1}\lambda_p\,\betap^2\,\lambda_{p+1}
\ee
and the parameter $\theta_p(a)=\theta_p(a;\beta,\lambda)\geq0$ is defined by:
\be\label{gammap}
\theta_p(a)^{\,2} \,\equiv\, \begin{cases}
\,\lambda_1\,a_1\,\betauno^2\ & \textrm{for }p=1 \\[5pt]
\,\lambda_p\left(\,\dfrac{1}{a_{p-1}}\,\betapm^2 +\, a_p\,\betap^2\right)\ & \textrm{for }p=2,\dots,K-1 \\[8pt]
\,\lambda_K\,\dfrac{1}{a_{K-1}}\,\betaKm^2\ & \textrm{for }p=K \\[5pt]
\end{cases}\;.
\ee
\end{theorem}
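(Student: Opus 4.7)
The plan is to apply a Guerra-type Gaussian interpolation between the DBM and a collection of $K$ independent single-layer SK models. Fix $\va\in\Rpz^{K-1}$ and let $\theta_p=\theta_p(\va)$ be as in (\ref{gammap}). Introduce, for each layer $p$, auxiliary i.i.d.\ standard Gaussian couplings $\tilde J^{(p)}_{ij}$, $i,j\in L_p$, independent of the $J$'s, and denote by $\HSKp$ the corresponding SK Hamiltonian built on the spins of $L_p$ via (\ref{eq:HSK}). For $t\in[0,1]$ set
\be
\varphi_N(t) \,\equiv\, \frac{1}{N}\,\E\log \sum_{\sigma\in\{-1,1\}^N}\exp\Bigg(-\sqrt{t}\,H_{\Lambda_N}(\sigma)-\sqrt{1-t}\sum_{p=1}^K\theta_p\,\HSKp(\sigma)+\sum_{p=1}^K\sum_{i\in L_p}h_i\,\sigma_i\Bigg).
\ee
Then $\varphi_N(1)=\pDBMN$, and since the SK partition functions decouple across layers one has $\varphi_N(0)=\sum_{p=1}^K \lambda_p^{(N)}\,\pSKNp(\theta_p,h^{(p)})$.

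Compute $\varphi_N'(t)$ by the standard Gaussian integration-by-parts argument. Using the covariance formulas (\ref{eq:cov H}) and (\ref{eq:cov HSK}), one obtains $\varphi_N'(t) = D_N - \E\langle O(\sigma^1,\sigma^2)\rangle_t$, with diagonal part
\be
D_N \,=\, \sum_{p=1}^{K-1}\lambda_p^{(N)}\,\betap^2\,\lambda_{p+1}^{(N)}\,-\,\tfrac{1}{2}\sum_{p=1}^{K}\lambda_p^{(N)}\,\theta_p^2
\ee
and two-replica off-diagonal part
\be
O(\sigma^1,\sigma^2) \,=\, \sum_{p=1}^{K-1}\lambda_p^{(N)}\,\betap^2\,\lambda_{p+1}^{(N)}\,\qp\,\qpp \,-\, \tfrac{1}{2}\sum_{p=1}^{K}\lambda_p^{(N)}\,\theta_p^2\,\qp^{\,2},
\ee
where all overlaps are evaluated at the replica pair $(\sigma^1,\sigma^2)$.

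The crucial step is to show that $O\leq 0$ pointwise in $(\sigma^1,\sigma^2)$. To this end, apply the weighted AM-GM inequality $2xy\leq b_p\,x^2+y^2/b_p$ to each cross product $\qp\,\qpp$ with the asymmetric weight $b_p=\lambda_p^{(N)}\,a_p/\lambda_{p+1}^{(N)}>0$. After reindexing, the coefficient of $\qp^{\,2}$ in the resulting upper bound collects at most two contributions, one from the $(p{-}1,p)$ product and one from the $(p,p{+}1)$ product, and a direct computation shows that this coefficient equals exactly $\tfrac{1}{2}\lambda_p^{(N)}\,\theta_p^2$ with $\theta_p^2$ given by (\ref{gammap}), with the prescribed boundary behaviour at $p=1$ and $p=K$. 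Consequently $O\leq 0$ and $\varphi_N'(t)\geq D_N$.

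Integrating $\varphi_N'(t)\geq D_N$ over $[0,1]$ yields $\pDBMN \geq \sum_{p=1}^K \lambda_p^{(N)}\,\pSKNp(\theta_p,h^{(p)})+D_N$. Passing to $\liminf_{N\to\infty}$ and using $\lambda_p^{(N)}\to\lambda_p$, the known convergence $\pSKNp\to\pSK$ and the continuity of $\pSK(\cdot,h^{(p)})$, the right-hand side tends to $\PDBM(\va)$ as in (\ref{variationalp}). Since $\va\in\Rpz^{K-1}$ was arbitrary, taking the supremum concludes. The main technical obstacle is the algebraic bookkeeping of the AM-GM step: verifying that the coefficients of each $\qp^{\,2}$ collapse exactly onto $\tfrac{1}{2}\lambda_p\theta_p^2$ forces the asymmetric weight $b_p=\lambda_p^{(N)}\,a_p/\lambda_{p+1}^{(N)}$ and explains the single-term boundary structure of $\theta_1^2$ and $\theta_K^2$ in (\ref{gammap}).
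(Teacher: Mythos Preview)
Your approach is exactly the paper's: Guerra interpolation between the DBM and $K$ independent SK models, Gaussian integration by parts, and a sign argument on the off-diagonal term (your AM--GM with weight $b_p=\lambda_p^{(N)}a_p/\lambda_{p+1}^{(N)}$ is literally the square the paper completes in (\ref{signQn})).

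There is one genuine slip in the bookkeeping. You interpolate with the \emph{limiting} $\theta_p=\theta_p(a;\beta,\lambda)$, but the AM--GM step, with weights built from $\lambda^{(N)}$, produces coefficients $\tfrac12\lambda_p^{(N)}\big(\theta_p^{(N)}\big)^2$ where $\theta_p^{(N)}=\theta_p(a;\beta,\lambda^{(N)})$, not $\tfrac12\lambda_p^{(N)}\theta_p^2$ as you claim. Hence $O\le0$ need not hold at finite $N$ as written. The fix is exactly what the paper does: interpolate with $\theta_p^{(N)}$ and carry $\theta_p^{(N)}$ through the whole finite-volume computation; then pass to the limit using that $\pSKN(\cdot,h)$ is convex in $\beta$, so $\pSKN\to\pSK$ uniformly on compacts and $\pSKNp(\theta_p^{(N)},h^{(p)})\to\pSK(\theta_p,h^{(p)})$.
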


\begin{proof}
We are going to prove the following lower bound at finite volume:
\be\label{main}
\pDBMN \,\geq\, \sum_{p=1}^K \lambda_p^{(N)}\, \pSKNp\big(\theta_p^{(N)},h^{(p)}\big) \,-\, \frac{1}{2}\,\sum_{p=1}^K \lambda_p^{(N)} \big(\theta_p^{(N)}\big)^2 \,+\, \sum_{p=1}^{K-1}\lambda_p^{(N)}\betap
^2\,\lambda_{p+1}^{(N)}
\ee
where $\theta_p^{(N)}\equiv \theta_p(a;\beta
,\lambda^{(N)})$ and $\va\in\Rpz^{K-1}$ can be arbitrarily chosen.
The lower bound \eqref{main2} will follow immediately by letting $N\to\infty$, since $\pSKN(\beta,h)$ is convex with respect to $\beta$ and thus the convergence to $\pSK$ is uniform on compact sets.

For every $p=1,\ldots,K$ let $\HSKp(s)$, $s\in\{-1,1\}^{L_p}\,$ be a Gaussian process representing the Hamiltonian of an SK model over the $N_p$ spin variables in the layer $L_p\,$.
We assume that $\HSKu,\dots,\HSKK$ are independent processes, also independent of the Hamiltonian $H_{\Lambda_N}$. 
For $\sigma\in\{-1,1\}^N$ and $t\in[0,1]$ we define an interpolating Hamiltonian as follows:
\be\label{inth}
\mathcal H_{N}(\sigma;t) \,\equiv\,
\sqrt{t}\; H_{\Lambda_N}(\sigma) \,+\,
\sqrt{1-t}\; \sum_{p=1}^K\, \theta_p^{(N)}\, \HSKp(\sigma_{L_p}) \;,
\ee
where of course $\sigma_{L_p}\equiv(\sigma_i)_{i\in L_p}\,$.
An interpolating quenched pressure is naturally defined as
\be\label{intpe}
\varphi_{N}(t)\,\equiv\, \frac{1}{N}\, \E\,\log\,\mathcal Z_{N}(t) \ ,
\ee
where
\be\label{partinn}
\mathcal Z_{N}(t)\,\equiv\, \sum_{\sigma\in\{-1,1\}^N} \exp\bigg(-\mathcal H_{N}(\sigma,t) \,+\, \sum_{p=1}^K\sum_{i\in L_p} h_i\,\sigma_i \bigg)
\ee
and $\E$ denotes the expectation with respect to all the couplings $J_{ij}$'s, $\tilde J_{ij}$'s, $h_i$'s.
The quenched pressure of the DBM and a convex combination of quenched pressures of SK models are recovered for $t=1$ and $t=0$ respectively:
\begin{align} \label{t1}
& \varphi_{N}(1) \,=\, \pDBMN \;,\\
\label{t0}
& \varphi_{N}(0) \,=\, \sum_{p=1}^K \lambda_p^{(N)}\, \pSKNp(\theta_p^{(N)},h^{(p)}) \;.
\end{align}
For every function $f:\{-1,1\}^{N}\times\{-1,1\}^{N}\rightarrow \R\,$ we denote
\be\label{qexp}
\left\langle\, f\,\right\rangle_{N,t} \,\equiv\,
\E\,\sum_{\sigma,\tau}\frac{e^{-\beta\,\mathcal H_{N}(\sigma;t)\,-\,\beta\,\mathcal H_{N}(\tau;t) \,+\, \sum_{p=1}^K\sum_{i\in L_p}h_i(\sigma_i+\tau_i)}}{\mathcal Z_N^2(t)}\,f(\sigma,\tau) \;.
\ee
Let $Q_N:\{-1,1\}^{N}\times\{-1,1\}^{N}\rightarrow \R\,$, 
\be \label{Qn}
Q_N\,\equiv\;
2\sum_{p=1}^{K-1} \lambda_p^{(N)}\betap
\,\lambda_{p+1}^{(N)}\; \qp\,\qpp \,-\,
\sum_{p=1}^K \lambda_p^{(N)} \big(\theta_p^{(N)}\big)^2 \qp^2 \;.
\ee
Gaussian integration by parts leads to the following result:
\be\label{deriv_ann}
\frac{d\varphi_N}{dt} \,=\,
\frac{1}{2}\, \Bigg(2\sum_{p=1}^{K-1}\lambda_p^{(N)}\,\betap
\,\lambda_{p+1}^{(N)} \,-\, \sum_{p=1}^K \lambda_p^{(N)} \big(\theta_p^{(N)}\big)^2 \Bigg) \,-\,
\frac{1}{2}\,\Big\langle Q_N \Big\rangle_{N,t} \;.
\ee
Now replacing the definition \eqref{gammap} of $\theta_p^{(N)}=\theta_p(a;\beta,\lambda^{(N)})$ into \eqref{Qn}, we obtain
\be \label{signQn}
Q_N \,=\, -\sum_{p=1}^{K-1} \betap^2\; \bigg(\lambda_{p+1}^{(N)}\,\frac{1}{\sqrt{a_p}}\;\qpp \,-\, \lambda_p^{(N)}\,\sqrt{a_p}\;\qp \bigg)^{\!2} \,\leq 0 \;.
\ee
The claim \eqref{main} follows immediately from \eqref{t1}, \eqref{t0}, \eqref{deriv_ann} and \eqref{signQn}.
\end{proof}

\begin{remark}\label{rk:DBMstationary}
$a=(a_p)_{p=1,\dots,K-1}$ is a stationary point of $\PDBM$ if and only if 
\be \label{ceDBM}
\frac{1}{a_p}\;\lambda_{p+1}\,\qSK\!\big(\theta_{p+1}(a),h^{(p+1)}\big) \,=\, \lambda_p\,\qSK\!\big(\theta_p(a),h^{(p)}\big)
\ee
for every $p=1,\dots,K-1\,$, where we define $\qSK(\beta,h)\geq0$ by
\be 
\qSK(\beta,h)^2\,\equiv\, \lim_{N\to\infty}\,\E\sum_{\sigma,\tau\in\{-1,1\}^N} q_N(\sigma,\tau)^2\; \mu_N^\textup{SK}(\sigma,\tau)\\[-15pt]
\ee
and
\be 
\mu_N^\textup{SK}(\sigma,\tau) \,\equiv\, \frac{1}{\big(\ZSKN\big)^2}\,\exp\bigg(-\beta\HSKN(\sigma)\,-\,\beta\HSKN(\tau)\,+\,\sum_{p=1}^K\sum_{i\in L_p} h_i\,(\sigma_i+\tau_i)\,\bigg) \;.
\ee
Since $\frac{\partial}{\partial\beta}\pSK(\beta,h)=\beta\,\big(1-\qSK(\beta,h)^2\big)\,$ \cite{Tala}, it is straightforward to compute $\frac{\partial}{\partial a_p}\PDBM\,$ from definition \eqref{variationalp} and find the stationary condition \eqref{ceDBM}.
\end{remark}

\section{The annealed region of the DBM} \label{sec:ann}

In this Section we consider the model in absence of external field ($h=0$) and we identify a region where the quenched and the annealed pressure of the DBM coincide.

\begin{definition}
The annealed pressure of the DBM is
\be 
\pADBM \,\equiv\, \lim_{N\to\infty} \frac{1}{N}\log\E\,Z_{\Lambda_N} \;.
\ee 
\end{definition}
It can be easily computed due to the Gaussian nature of the model:
\be \label{annpressureDBM}
\pADBM(\beta,\lambda) \,=\, \log 2 \,+\, \sum_{p=1}^{K-1}\lambda_p\,\betap^2\,\lambda_{p+1} \,.
\ee
By concavity of the $\log$, the annealed pressure is an upper bound for the quenched one:
\be \label{JensenDBM}
\limsup_{N\to\infty} \pDBMN \,\leq\, \pADBM \;.
\ee
The system is said to be in the \textit{annealed regime} when the parameters $(\beta,\lambda)$ are such that $\lim_{N\to\infty}\pDBMN = \pADBM\,$.

By Theorem \ref{maint} we can investigate the annealed regime of the DBM relying on the established results for the annealed regime of the SK model.
Let $\pSK$ be the limiting quenched pressure of an SK model and let $\pASK\equiv\lim_{N\to\infty}N^{-1}\log\E \ZSKN$ be its annealed version. Clearly:
\be\label{anndis}
\pSK \,\leq\, \pASK \,=\, \log 2 +\frac{\beta^2}{2} \;.
\ee
Equality is achieved in the so called annealed region of the SK model \cite{ALR,CG,Panchenko-Book,Tala}:
\be\label{anneq}
\pSK(\beta) \,=\,  \pASK(\beta) \quad \textrm{if }\beta^2\leq\frac{1}{2} \;.
\ee
Now consider the following system of inequalities:
\be\label{AK0}
\begin{cases}
\lambda_1\,a_1\,\betauno^2 <\, \dfrac{1}{2}\\[5pt]
\lambda_p\,\Big(\,\dfrac{1}{a_{p-1}}\,\betapm^2 +\, a_p\,\betap^2\Big) <\, \dfrac{1}{2}\ & \textrm{for }p=2,\dots,K-1 \\[8pt]
\lambda_K\,\dfrac{1}{a_{K-1}}\,\betaKm^2 <\, \dfrac{1}{2}
\end{cases}
\ee
and the following region of parameters of the DBM:
\be\label{AK}
A_K \equiv \Big\{(\beta,\vlambda)\in\Rpz^{K-1}\times T_K\ \Big|\ \exists\,\va\in\Rpz^{K-1} :
\eqref{AK0}\text{ is verified}
\Big\}\;,
\ee
where 
$T_K \equiv \{ (\lambda_1,\dots,\lambda_K) \in [0,1]^K \,|\, \sum_{p=1}^K\lambda_p=1 \}\,$ denotes the $K-$dimensional simplex.
We denote by $\overline{A_K}$ the topological closure of $A_K\,$.

\begin{theorem}\label{annt}
If $(\beta,\vlambda)\in \overline{A_K}$ there exists
\be\label{annDBM}
\lim_{N\to\infty} \pDBMN \,=\, \pADBM \,.
\ee
\end{theorem}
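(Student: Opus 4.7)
The plan is to sandwich $\pDBMN$ between the annealed pressure as an upper bound and the variational lower bound of Theorem \ref{maint}. The upper bound $\limsup_N \pDBMN \leq \pADBM$ is the Jensen inequality \eqref{JensenDBM}. For the lower bound I use $\liminf_N \pDBMN \geq \sup_{\va \in \Rpz^{K-1}} \PDBM(\va)$.

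The key observation is that $\PDBM(\va) = \pADBM$ as soon as one can choose $\va \in \Rpz^{K-1}$ with $\theta_p(\va)^2 \leq 1/2$ for every $p$. Indeed, since $h = 0$, the SK annealed equality \eqref{anneq} gives $\pSK(\theta_p(\va),0) = \log 2 + \theta_p(\va)^2/2$; substituting in \eqref{variationalp}, the two half-sums $\tfrac{1}{2}\sum_p \lambda_p\,\theta_p(\va)^2$ cancel and, using $\sum_p \lambda_p = 1$,
\begin{equation*}
\PDBM(\va) \,=\, \log 2 \,+\, \sum_{p=1}^{K-1} \lambda_p\,\betap^{\,2}\,\lambda_{p+1} \,=\, \pADBM(\beta,\lambda).
\end{equation*}
For any $(\beta,\lambda) \in A_K$ the definition \eqref{AK0} supplies such an $\va$ with \emph{strict} inequalities, so the lower bound matches the Jensen upper bound and \eqref{annDBM} holds throughout the open set $A_K$.

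To extend to $\overline{A_K}$, I would approximate a boundary point $(\beta,\lambda) \in \overline{A_K}$ by a sequence $(\beta^{(n)},\lambda^{(n)}) \in A_K$; the upper bound is preserved by continuity of $\pADBM$ in $(\beta,\lambda)$. For each $n$ there is $\va^{(n)} \in \Rpz^{K-1}$ realizing $\theta_p(\va^{(n)};\beta^{(n)},\lambda^{(n)})^2 < 1/2$, and by compactness of $[0,\infty]^{K-1}$ one extracts $\va^{(n)} \to \va^{\star}$. When $\va^{\star} \in \Rpz^{K-1}$, continuity of $\theta_p$ together with the fact that the SK annealed region $\{\beta^2 \leq 1/2\}$ is closed (so \eqref{anneq} remains valid at $\beta^2 = 1/2$) reproduces the preceding computation and yields $\PDBM(\va^{\star};\beta,\lambda) = \pADBM(\beta,\lambda)$.

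The main obstacle is the degenerate case in which some component $a_p^{(n)}$ escapes to $0$ or $+\infty$. There, the constraint $\theta_{p+1}(\va^{(n)})^2 \leq 1/2$ (resp.\ $\theta_p(\va^{(n)})^2 \leq 1/2$) can survive in the limit only if the corresponding factor $\lambda_{p+1}\betap^{\,2}$ (resp.\ $\lambda_p\betap^{\,2}$) vanishes, i.e.\ the interaction along the $p$-th edge of the chain disappears from $\pADBM$ and the $K$-path splits into shorter coupled sub-chains. The argument can then be closed by an inductive reduction on each surviving sub-chain, or, more cleanly, by appealing to the monomer-dimer/matching-polynomial description of $A_K$ developed in the Appendix along the lines of Heilmann and Lieb: there $\overline{A_K}$ is characterized intrinsically by the position of the largest real root of the matching polynomial of the weighted $K$-path, which bypasses the explicit treatment of the escaping $a_p^{(n)}$.
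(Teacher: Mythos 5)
Your argument is essentially the paper's: the same sandwich between the Jensen upper bound \eqref{JensenDBM} and the lower bound of Theorem \ref{maint}, together with the observation that any $\va$ verifying \eqref{AK0} forces $\PDBM(\va)=\pADBM$ through the SK annealed identity \eqref{anneq}, and your computation on the open set $A_K$ is complete and correct. The passage to $\overline{A_K}$, which you rightly single out as the only delicate point (the witnesses $a^{(n)}$ may escape to $0$ or $+\infty$), is dispatched by the paper in a single clause --- it simply asserts that the supremum in \eqref{main 3} still vanishes on the closure, implicitly using that the condition $\beta^2\le\tfrac12$ in \eqref{anneq} is non-strict --- so your sketched resolutions (decoupling along edges where $\lambda_p\beta_p^2\lambda_{p+1}\to0$, or invoking the characterization $\rho(\beta,\lambda)\le 1$ via Proposition \ref{annp} and the Appendix) actually supply more detail than the original, even though you stop short of carrying either one out.
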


\begin{proof}
The lower bound \eqref{main2} for the quenched pressure of the DBM rewrites as:
\be\label{main 3}
\liminf_{N\to\infty}\pDBMN \,\geq\, \sup_{\va\in\Rpz^{K-1}}\,
\sum_{p=1}^K\, \lambda_p\, \Big(\pSK\left(\theta_p(\va)\right) \,-\, \pASK\left(\theta_p(\va)\right) \Big) \,+\, \pADBM \;.
\ee
Thanks to \eqref{anndis} and \eqref{anneq}, if $(\beta,\vlambda)\in \overline{A_K}$ then the supremum in \eqref{main 3} vanishes and
\be\label{queann}
\liminf_{N\to\infty}\pDBMN \,\geq\, \pADBM \;.
\ee
This bound together with \eqref{JensenDBM} concludes the proof.
\end{proof}

It is an open question whether $\overline{A_K}$ is the full annealed region of the system. We will see that Proposition \ref{stabilityprop} suggests a positive answer. We are now interested in a more explicit characterization of $A_K$. We mention that such a characterization can be interesting for inference problems as suggested in \cite{WK2}.
It is convenient to introduce the following family of polynomials.

\begin{definition} \label{def:Liebpoly}
Let $x\in\C$ and $t=(\tp)_{p=1,\dots,K-1}\in[0,\infty)^{K-1}$. We define recursively
\be \label{eq:Liebpoly} \begin{cases}
\,\DDelta_{p+1}(x,t) \,\equiv\, x\,\DDelta_{p}(x,t) - \tp\,\DDelta_{p-1}(x,t)\quad \text{for }p=1,\dots,K-1 \\[5pt]
\,\DDelta_1(x,t) \,\equiv\, x \,,\ 
\DDelta_0(x,t) \,\equiv\, 1
\end{cases}\;. \ee
\end{definition}

These orthogonal polynomials have several characterizations and were studied by Heilmann and Lieb \cite{HL,HLprl}. Some relevant properties can be found in the Appendix \ref{sec:appendix}.

\begin{remark}
The polynomial $\DDelta_K(x,t)$ has an interesting combinatorial interpretation. 
Let's denote by $\mathscr L_K$ the linear graph of vertex set $\{1,\dots,K\}$ and edge set $\{(p,p+1)\,|\,p=1,\dots,K-1\}\,$.
A \textit{matching} on $\mathscr L_K$ is a subset of pairwise disjoint edges.
Then:
\be \label{eq:match}
\DDelta_K(x,t) \,=\, \sum_{d=0}^{K/2} (-1)^d\, x^{K-2d}\, f_{d,K}(t) \ ,
\ee
where:
\be 
f_{d,K}(t) \,\equiv\, \sum_{\substack{D\text{ matching on }\mathscr{L}_K\\|D|=d}}\, \prod_{(p,p+1)\in D}\!\!\!\!t_{p} \ .
\ee
Indeed the polynomial on the right hand side of \eqref{eq:match} verifies the recursion relation \eqref{eq:Liebpoly} (see \cite{HL}).
\end{remark}

\begin{proposition} \label{annp}
Let $(\beta,\lambda)\in\Rpz^{K-1}\times T_K$ and consider the vector $t=(t_p)_{p=1,\dots,K-1}\,$ with
\be \label{eq:t}
\tp(\beta,\lambda) \,\equiv\, 4\,\lambda_p\,\betap^4\,\lambda_{p+1}
\ee
 for every $p=1,\dots,K-1$. Define
\be \label{rhoK}
\rho(\beta,\lambda) \,\equiv\, \max\big\{x>0 :\, \DDelta_K\!\big(x,\,t(\beta,\lambda)\big)=0 \big\} \;.
\ee
The followings are equivalent:
\begin{itemize}
\item[i)] $(\beta,\lambda)\in A_K$
\item[ii)] $\DDelta_p\!\big(1,\,t(\beta,\lambda)\big)>0\quad\forall\,p=2,\dots,K$
\item[iii)] $\rho(\beta,\lambda)<1$
\end{itemize}
\end{proposition}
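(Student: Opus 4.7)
My plan is to prove (i)$\iff$(ii) by a change of variable that turns \eqref{AK0} into a one-step recursion matching \eqref{eq:Liebpoly}, and then (ii)$\iff$(iii) by identifying the $\DDelta_q(x,t)$ with characteristic polynomials of principal submatrices of a symmetric tridiagonal matrix, so that the conclusion follows from Sylvester's criterion combined with a symmetry of the spectrum.

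For (i)$\iff$(ii), the first step is the substitution $y_p:=2\,\lambda_p\,\beta_p^2\,a_p$ for $p=1,\dots,K-1$, which is a bijection of $\R_+^{K-1}$ onto itself in the generic case $\lambda_p\beta_p^2>0$ (degenerate cases follow by continuity or direct inspection). Using the identity $2\lambda_p\beta_{p-1}^2/a_{p-1}=t_{p-1}/y_{p-1}$, the system \eqref{AK0} becomes
\begin{equation*}
y_1<1,\qquad y_p+\frac{t_{p-1}}{y_{p-1}}<1\ \ (p=2,\dots,K-1),\qquad \frac{t_{K-1}}{y_{K-1}}<1,
\end{equation*}
so $(\beta,\lambda)\in A_K$ is equivalent to the solvability of this chain for some $y\in\R_+^{K-1}$. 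Feasibility is then analyzed by forward iteration: setting $d_1:=1$ and $d_{p+1}:=1-t_p/d_p$, induction using \eqref{eq:Liebpoly} at $x=1$ gives $d_p=\DDelta_p(1,t)/\DDelta_{p-1}(1,t)$. For $y_{p-1}\in(0,d_{p-1})$ the optimal upper bound on $y_p$ is $1-t_{p-1}/y_{p-1}$, strictly below $d_p$ but approaching $d_p$ as $y_{p-1}\uparrow d_{p-1}$. Hence a positive $y_p$ can be chosen at each step iff $d_p>0$ for $p=1,\dots,K-1$, equivalent inductively (since $\DDelta_0(1,t)=\DDelta_1(1,t)=1$) to $\DDelta_p(1,t)>0$ for $p=2,\dots,K-1$; the remaining endpoint constraint $y_{K-1}>t_{K-1}$ is compatible with the forward supremum iff $d_{K-1}>t_{K-1}$, which by \eqref{eq:Liebpoly} rewrites as $\DDelta_K(1,t)>0$. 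Conversely, if all these quantities are strictly positive, a continuity argument (choose $y_p=d_p-\varepsilon_p$ with $\varepsilon_p>0$ sufficiently small) produces an admissible $y$.

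For (ii)$\iff$(iii), I would interpret \eqref{eq:Liebpoly} as the three-term recursion for characteristic polynomials of the leading principal submatrices of the symmetric tridiagonal matrix $J_K$ having zero diagonal and off-diagonal entries $\sqrt{t_p}$, so that $\DDelta_q(x,t)=\det(xI_q-J_q)$. By Sylvester's criterion, $I_K-J_K$ is positive definite iff every leading principal minor $\DDelta_q(1,t)$ is positive, which is condition (ii). By the spectral theorem, $I_K-J_K$ is positive definite iff every eigenvalue of the symmetric matrix $J_K$ is strictly less than $1$. Finally, the similarity $DJ_KD=-J_K$ with $D=\diag((-1)^i)$---which holds because $J_K$ has zero diagonal---shows that the spectrum of $J_K$ is symmetric about $0$, so the largest eigenvalue of $J_K$ is exactly $\rho(\beta,\lambda)$, the largest real root of $\DDelta_K(\cdot,t)$. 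Thus ``all eigenvalues of $J_K$ less than $1$'' reduces to $\rho(\beta,\lambda)<1$, proving (ii)$\iff$(iii) and closing the chain of equivalences.

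\textbf{Main obstacle.} The most delicate point is the rigorous handling of strict feasibility in the second step: one has to combine the forward-supremum identification $d_p=\DDelta_p(1,t)/\DDelta_{p-1}(1,t)$ with a continuity argument ensuring that, whenever all $\DDelta_p(1,t)$ are strictly positive, one can simultaneously keep each $y_p$ strictly below $d_p$ while still realising $y_{K-1}>t_{K-1}$. The tridiagonal-matrix and Sylvester step giving (ii)$\iff$(iii) is then essentially standard linear algebra.
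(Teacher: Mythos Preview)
Your proof is correct. For (i)$\Leftrightarrow$(ii), your approach coincides with the paper's: the substitution $y_p=2\lambda_p\beta_p^2 a_p$ and the forward ratios $d_p=\DDelta_p(1,t)/\DDelta_{p-1}(1,t)$ are exactly the paper's $a_p^*:=\frac{1}{2\lambda_p\beta_p^2}\,\frac{z_p}{z_{p-1}}$ with $z_p=\DDelta_p(1,t)$, and the paper's induction ``$a_p^*\ge a_p>0$'' is your feasibility bound $y_p<d_p$. You are in fact slightly more careful than the paper about the strict-inequality issue, since the choice $a=a^*$ saturates the first $K-1$ constraints of \eqref{AK0} and a small perturbation is indeed required.

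For (ii)$\Leftrightarrow$(iii) the two proofs genuinely differ. The paper defers this equivalence to its Appendix, where it is obtained from the Heilmann--Lieb interlacing property of the zeros of the sequence $(\DDelta_p)_p$. You instead realise $\DDelta_q(\cdot,t)$ as the characteristic polynomial of the leading $q\times q$ principal submatrix of the \emph{symmetric} tridiagonal matrix $J_K$ with zero diagonal and off-diagonals $\sqrt{t_p}$, and then combine Sylvester's criterion with the spectral theorem and the sign-symmetry $DJ_KD=-J_K$. Your route is shorter and self-contained, needing only standard linear algebra; the paper's route passes through the full interlacing structure of the zeros, which is the classical monomer--dimer viewpoint and gives finer information, but is more than what is strictly needed for this equivalence.
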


\begin{proof}
i)$\Leftrightarrow$ii). To shorten the notation set $z_p\equiv\DDelta_p\!\big(1,\,t(\beta,\lambda)\big)$; by \eqref{eq:Liebpoly} we have
\be \label{eq:recursion zp} \begin{cases}
z_{p+1} \,=\, z_{p} \,-\, 4\,\lambda_p\,\betap^4\,\lambda_{p+1}\,z_{p-1} \quad \text{for }p=1,\dots,K-1 \\[5pt]
z_1 = 1 \,,\ 
z_0 = 1
\end{cases}\ .\ee
Set for every $p=1,\dots,K$
\be
a_p^*\,\equiv\,\frac{1}{2\,\lambda_p\,\betap^2}\,\frac{z_p}{z_{p-1}} \;;
\ee
then the following recursion relation follows from \eqref {eq:recursion zp}:
\be \begin{cases}
a_p^* \,=\, \dfrac{1}{2\,\lambda_p\,\betap^2} \,-\, \dfrac{\betapm^2}{\betap^2}\;\dfrac{1}{a_{p-1}^*} \quad \text{for }p=2,\dots,K \\[8pt]
a_1^* = \dfrac{1}{2\,\lambda_1\,\betauno^2}
\end{cases}\ .\ee
Now assume $z_1,\dots,z_K>0$. Then $a_1^*,\dots,a_K^*>0$ and choosing $a=(a_1^*,\dots,a_{K-1}^*)$ the system of inequalities \eqref{AK0} is verified.\\
On the other hand, assuming that there exists $a=(a_1,\dots,a_{K-1})\in\Rpz^{K-1}$ verifying \eqref{AK0}, one can prove by induction that $a_p^*\geq a_p>0$ for $p=1,\dots,K-1$ and $a_K^*>0\,$. Therefore $z_1,\dots,z_K>0\,$.

ii)$\Leftrightarrow$iii). The equivalence among these conditions is a consequence of the interlacing property of the zeros of $\DDelta_p\,$. A detailed proof can be found in the Appendix (Corollary \ref{cor:zeros} with $\rho=1$).
\end{proof}

\begin{remark} \label{rk:MK}
The polynomial $\DDelta_K(x,t)$ with $t=t(\beta,\lambda)$ defined in \eqref{eq:t} has also a linear algebra interpretation. Set:
\be \label{MK} \begin{split}
M(\beta,\lambda) \,&\equiv\, 2\,M_0(\beta)\,\diag(\lambda) \,=\\[5pt]
&=\, 2\,\begin{pmatrix}
0 					& \beta_1^2\lambda_2	& 	 					&								& \\[5pt]
\lambda_1\beta_1^2	& 0 					& \beta_2^2\lambda_3	& 								& \\[5pt]
					& \lambda_2\beta_2^2	& 0						&  								& \\
					& 						& 						& \ddots 						& \\
					&						&						&						   		& \beta_{K-1}^2\lambda_K \\
					&						&						& \lambda_{K-1}\beta_{K-1}^2 	& 0 \\
\end{pmatrix}
\end{split}\ee
where $M_0(\beta)$ is defined by \eqref{M0}.
The characteristic polynomial of $M(\beta,\lambda)$ is actually
\be \label{charpoly}
\DDelta_K\!\big(x,\,t(\beta,\lambda)\big) \,=\, \det\big( x\,I - M(\beta,\lambda) \big) \;.
\ee
Indeed using the Laplace expansion according to the last line of the matrix, it is easy to verify that the determinant on the right hand side of \eqref{charpoly} satisfies the recursion relation \eqref{eq:Liebpoly}.
Now since the zeros of $x\mapsto\DDelta_K(x,t(\beta,\lambda))$ are all real and symmetric with respect to the origin (see the Appendix), 
the largest one is the spectral radius of $M(\beta,\lambda)\,$:
\be \label{spectralradius}
\rho(\beta,\lambda) \,=\, \max\{|x| : x\text{ eigenvalue of }M(\beta,\lambda) \} \;.
\ee
\end{remark}

The next Proposition exploits the result of Proposition \ref{annp} in order to study the role of the parameters $\beta$ and $\lambda$ in the annealed behaviour of the system.

%

\begin{proposition} \label{prop:rhobound}
i) For every $\beta\in\R_+^{K-1}\,$, 
\be  \label{rhobound}
\sup_{\lambda\in T_K}\rho(\beta,\lambda) \,=\, \max_{p=1,\dots,K-1} \beta_p^2 \ .
\ee
The supremum is reached exactly for those $\lambda=\lambda^*(\beta)\in T_K$ such that there exists $p^*\in\{1,\dots,K-1\}\,$:
\be \label{bestlambda1}
\lambda_{p^*} \,=\, \lambda_{p^*\!+1} \,=\, \frac{1}{2}
\quad,\quad \beta_{p^*}=\max_{p=1,\dots,K-1}\beta_p
\ee
or $p^*\in\{2,\dots,K-1\}\,$:
\be \label{bestlambda2}
\lambda_{p^*} \,=\, \lambda_{p^*\!-1}+\lambda_{p^*\!+1} \,=\, \frac{1}{2}
\quad,\quad \beta_{p^*}=\beta_{p^*\!-1}=\max_{p=1,\dots,K-1}\beta_p \;.
\ee
ii) Moreover for every $\lambda\in T_K$, $\rho(\beta,\lambda)$ is a non-decreasing function of each $\beta_p$ for $p=1,\dots,K-1$.
\end{proposition}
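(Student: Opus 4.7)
The plan is to identify $\rho(\beta,\lambda)$ with the largest eigenvalue of the symmetric non-negative matrix
\[
S(\beta,\lambda) \,\equiv\, 2\,\diag(\sqrt{\lambda})\,M_0(\beta)\,\diag(\sqrt{\lambda})\;,
\]
which for $\lambda_p>0$ is similar to $M(\beta,\lambda)$ from Remark \ref{rk:MK} and extends continuously to the boundary of $T_K$. Since $S$ has zero diagonal on the bipartite path $\mathscr{L}_K$, its spectrum is symmetric about the origin (see the Appendix), so the Rayleigh principle gives
\be
\rho(\beta,\lambda) \,=\, \max_{\|v\|_2=1}\; 4\sum_{p=1}^{K-1}\beta_p^{2}\,\sqrt{\lambda_p\lambda_{p+1}}\;v_p\,v_{p+1}\;,
\ee
and the whole proof of i) will be driven by this formula.

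For the upper bound I would set $\beta_M^{2}\equiv\max_p\beta_p^{2}$ and $u_p\equiv\sqrt{\lambda_p}\,|v_p|$. Cauchy--Schwarz with the constraint $\sum_p\lambda_p=1$ gives $\|u\|_1\le 1$. Splitting indices into odd $O$ and even $E$, every edge of $\mathscr{L}_K$ joins $O$ to $E$, so
\be
\sum_{p=1}^{K-1}u_p\,u_{p+1}\,\le\,\Big(\sum_{p\in O}u_p\Big)\Big(\sum_{p\in E}u_p\Big)\,\le\,\frac{\|u\|_1^{2}}{4}\,\le\,\frac{1}{4}
\ee
by AM--GM. Multiplying by $4\beta_M^{2}$ yields $\rho(\beta,\lambda)\le\beta_M^{2}$. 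To see the bound is sharp I would plug in the two proposed $\lambda^*$'s: under \eqref{bestlambda1} only a $2\times 2$ block of $M$ survives, with off-diagonal entries $\beta_M^{2}$ and eigenvalues $\pm\beta_M^{2}$; under \eqref{bestlambda2} a $3\times 3$ block survives whose characteristic polynomial, after using $\lambda_{p^*\!-1}+\lambda_{p^*\!+1}=\lambda_{p^*}=1/2$, factors as $-x(x^{2}-\beta_M^{4})$.

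The delicate part is to show these are \emph{all} maximizers. Tracking equality in the three inequalities above: Cauchy--Schwarz is saturated iff $|v_p|=\sqrt{\lambda_p}$, whence $u_p=\lambda_p$; the $AB$ bound is saturated iff every pair of positive-mass indices of opposite parity is adjacent in $\mathscr{L}_K$, which forces the support of $\lambda$ to consist of two or three consecutive layers (a support of size $\ge 4$, or a non-consecutive one, always produces a non-adjacent odd/even pair and loses strictly); the AM--GM step requires in addition that the odd and even portions of $\lambda$ each weigh $1/2$; and the replacement of $\beta_p^{2}$ by $\beta_M^{2}$ is tight iff every active bond satisfies $\beta_p=\beta_M$. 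Reading these four conditions together yields exactly \eqref{bestlambda1} in the two-layer case and \eqref{bestlambda2} in the three-layer case.

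Part ii) follows from Perron--Frobenius monotonicity: $M(\beta,\lambda)=2M_0(\beta)\diag(\lambda)$ is entry-wise non-negative, and the only entries depending on $\beta_p$, namely $M_{p,p+1}=2\beta_p^{2}\lambda_{p+1}$ and $M_{p+1,p}=2\beta_p^{2}\lambda_p$, are non-decreasing in $\beta_p$; the spectral radius of a non-negative matrix being monotone in its entries, $\rho(\beta,\lambda)$ is non-decreasing in each $\beta_p$. The main obstacle is the equality analysis in i); the rest reduces to Rayleigh/AM--GM or a direct determinant computation.
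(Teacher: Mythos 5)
Your argument is correct, and it reaches the result by a genuinely different route than the paper. The paper bounds the spectral radius of $M(\beta,\lambda)$ by $\Vert M(\beta,\lambda)^2\Vert_\infty^{1/2}$, estimates each row sum of $M^2$ by the combinatorial inequality of Lemma \ref{lem:diseq}, and then verifies sharpness and the converse by evaluating the matching polynomial $\DDelta_K$ at $x=\hat\beta^2$ under \eqref{bestlambda1}--\eqref{bestlambda2}. You instead symmetrize to $S=2\,\diag(\sqrt{\lambda})M_0(\beta)\diag(\sqrt{\lambda})$ (similar to $M$ in the interior of $T_K$, with the same characteristic polynomial $\DDelta_K(x,t(\beta,\lambda))$ everywhere, so the identification extends to the boundary) and run everything through the Rayleigh quotient; the chain Cauchy--Schwarz $\to$ bipartite splitting $\to$ AM--GM is in substance the same parity inequality as Lemma \ref{lem:diseq}, but applied once globally to the quadratic form rather than row by row, which makes the saturation analysis cleaner: a single maximizing vector must saturate all four inequalities simultaneously, and tracking equality (proportionality $|v_p|=\sqrt{\lambda_p}$, adjacency of all opposite-parity support points, balance of the odd and even masses, and $\beta_p=\beta_M$ on active bonds) does yield exactly \eqref{bestlambda1} or \eqref{bestlambda2}. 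Two small points of hygiene: under \eqref{bestlambda1} it is $S$, not $M$, that literally reduces to a $2\times2$ block ($M$ retains nonzero entries in rows $p^*-1$ and $p^*+2$), though the nonzero spectrum is the same since the characteristic polynomials coincide; and the sufficiency check could equally be read off from $\DDelta_K(x,t)=x^{K-2}(x^2-\beta_M^4)$, resp. $x^{K-3}\cdot x(x^2-\beta_M^4)$, as the paper does. Part ii) is identical to the paper's argument.
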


Physically \textit{ii)} means that increasing the local temperatures pushes the system toward the annealed region.
On the other hand \textit{i)} implies that if all the inverse temperatures $\beta_p<1$ for $p=1,\dots,K-1$, then the system is in the annealed regime for every choice of the form factors $\lambda$. Furthermore if this is not the case, the system can be driven out of the region $A_K$ by localizing the positive density layers around the minimal temperature(s).

In order to prove Proposition \ref{prop:rhobound} we need the following elementary (but useful)
\begin{lemma} \label{lem:diseq}
Let $P\geq2$, $x_1,\dots,x_P\geq0$ and $b_1,\dots,b_{P-1}\geq0\,$. Set $S\equiv \sum_{p=1}^Px_p$ and $B \equiv \max_{p=1,\dots,P-1} b_p\,$. Then:
\be \label{diseq}
4\,\sum_{p=1}^{P-1} b_p\,x_p\,x_{p+1} \,\leq\, B \,S^2 \;.
\ee
Moreover we have equality in \eqref{diseq} if and only if there exists $p^*\in\{2,\dots,P-1\}$ such that
\be \label{onehalf1}
x_{p^*} \,=\, x_{p^*-1}+x_{p^*+1} \,=\, \frac{S}{2} \quad,\quad b_{p^*-1} = b_{p^*} = B
\ee
or there exists $p^*\in\{1,\dots,P-1\}$ such that
\be \label{onehalf2}
x_{p^*} \,=\, x_{p^*+1} \,=\, \frac{S}{2} \quad,\quad b_{p^*} = B \;.
\ee
\end{lemma}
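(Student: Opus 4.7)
The plan is to reduce the inequality to AM--GM after a parity split, and then trace the equality case back through the resulting chain of inequalities.

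First I would bound $b_p\leq B$ uniformly and split the sum according to the parity of $p$. Setting $U\equiv\sum_{p\text{ odd}} x_p$ and $V\equiv\sum_{p\text{ even}} x_p$, every product $x_p x_{p+1}$ couples one odd-indexed with one even-indexed factor, so
\[
\sum_{p=1}^{P-1} x_p x_{p+1} \;\leq\; \sum_{\substack{p\text{ odd}\\ q\text{ even}}} x_p x_q \;=\; UV \;\leq\; \tfrac14 (U+V)^2 \;=\; \tfrac14 S^2,
\]
which yields \eqref{diseq} after multiplying by $4B$.

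For the characterization of equality my strategy is to identify which of the three inequalities above is tight. Assuming $S>0$ (the case $S=0$ being trivial), equality in AM--GM forces $U=V=S/2$; equality in the parity bound forces $x_p x_q=0$ for every odd $p$ and every even $q$ with $|p-q|\geq 3$; and equality in the coefficient bound forces $b_p=B$ whenever $x_p x_{p+1}>0$. The next step is a support analysis of $I\equiv\{p:x_p>0\}$: since $U,V>0$ both parities occur in $I$, and the non-adjacency condition forces any odd and any even element of $I$ to be adjacent. An elementary case split then shows that $I$ is either a pair $\{p^*,p^*+1\}$ of consecutive indices, or a triple $\{p^*-1,p^*,p^*+1\}$ of consecutive indices. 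In the pair case $U=V$ forces $x_{p^*}=x_{p^*+1}=S/2$ and the coefficient bound gives $b_{p^*}=B$, matching \eqref{onehalf2}; in the triple case $U=V$ forces $x_{p^*}=S/2=x_{p^*-1}+x_{p^*+1}$ and the coefficient bound gives $b_{p^*-1}=b_{p^*}=B$, matching \eqref{onehalf1}. The converse would be immediate: substituting either configuration into $4\sum_p b_p x_p x_{p+1}$ yields $4B\,(S/2)^2=BS^2$.

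The hard part will be the support analysis. The three inequalities themselves are standard, but ruling out supports of size $\geq 4$ and non-consecutive supports, while correctly handling the boundary cases $p^*=1$ and $p^*=P-1$, requires the parity bookkeeping outlined above. All the other ingredients (the parity decomposition, AM--GM, and the termwise bound $b_p\leq B$) are routine.
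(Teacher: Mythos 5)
Your argument is essentially the paper's own proof: the paper's opening step $\bigl(\sum_p(-1)^p x_p\bigr)^2\geq 0$ is exactly your AM--GM bound $4UV\leq(U+V)^2$ for the parity sums, and the subsequent chain (adjacent products as a subsum of $UV$, then $b_p\leq B$ termwise) together with the saturation conditions \eqref{onehalf0} and the support analysis reducing to a consecutive pair or triple is the same in both. The only difference is that you spell out the support analysis that the paper dismisses as ``easy to check'', so the proposal is correct and takes the same route.
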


\begin{proof}
Since
\be 
0\,\leq\,
\bigg(\sum_p (-1)^p\, x_p\bigg)^2 =\,
\sum_p x_p^2 \,+\, 2\,\sum_{p<p'}(-1)^{p+p'}x_p\,x_{p'} \;,
\ee
the following inequality holds true:
\be 
\sum_p x_p^2 \;\geq\, -2\,\sum_{p<p'}(-1)^{p+p'}x_p\,x_{p'} \;.
\ee
Therefore:
\be \begin{split}
\bigg(\sum_p x_p\bigg)^2 &=\,
\sum_p x_p^2 \,+\, 2\sum_{p<p'}x_p\,x_{p'} \\[6pt]
&\geq\, 2\,\sum_{p<p'}\Big(1-(-1)^{p+p'}\Big)\,x_p\,x_{p'} \\[6pt]
&\geq\, 4\,\sum_{p}x_p\,x_{p+1} \;.
\end{split}\ee
As a trivial consequence we have:
\be 
4\,\sum_{p}b_p\,x_p\,x_{p+1} \,\leq\, 4\,B\sum_{p}x_p\,x_{p+1} \,\leq\, B\,\bigg(\sum_p x_p\bigg)^2\;.
\ee
Now all the previous inequalities are saturated if and only if the following conditions are fulfilled:
\be \label{onehalf0} \begin{cases}
\,\sum_{p\,\text{even}}x_p=\sum_{p\,\text{odd}}x_p \\[8pt]
\, x_p\,x_{p'}=0 \quad\forall\,p,p':\,p+p'\,\text{odd},\, p\leq p'+3 \\[6pt]
\, b_p=B \quad\forall\,p:\,x_p\,x_{p+1}\neq0
\end{cases} \;.\ee
It is easy to check that \eqref{onehalf0} is equivalent to \eqref{onehalf1} or \eqref{onehalf2}, concluding the proof.
\end{proof}

\begin{proof}[Proof of Proposition \ref{prop:rhobound}]
By Remark \ref{rk:MK}, $\rho(\beta,\lambda)$ is the spectral radius of the matrix $M(\beta,\lambda)\,$. Hence:
\be 
\rho(\beta,\lambda) \,\leq\, \Vert M(\beta,\lambda)^2 \Vert_\infty^{1/2}
\ee
and the square of the matrix \eqref{MK} can be easily computed leading to
\be \label{norma2} \begin{split}
\Vert M(\beta,\lambda)^2 \Vert_\infty \,&=\,
4\,\max_{p=1,\dots,K} \sum_{p'=p-2}^{p+1} b_{p'}^{(p)}\, \lambda_{p'}\,\lambda_{p'+1} \\
&\leq\, \max_{p=1,\dots,K-1} \beta_p^4 \;,
\end{split}\ee
where for every $p=1,\dots,K$, $p'=p-2,\dots,p+1$ we set
\be 
b_{p'}^{(p)} \,\equiv\, \beta_{p-2}^2\,\beta_{p-1}^2\;\delta_{p-2,p'} +\,
\beta_{p-1}^4\;\delta_{p-1,p'} +\,
\beta_{p}^4\;\delta_{p,p'} +\,
\beta_{p}^2\,\beta_{p+1}^2\;\delta_{p+1,p'}
\ee
and for convenience we denote $\lambda_{p}\equiv0$ for $p\notin\{1,\dots,K\}\,$ and $\beta_{p}\equiv0$ for $p\notin\{1,\dots,K-1\}$.
The inequality in \eqref{norma2} follows by Lemma \ref{lem:diseq} since $\sum_p \lambda_p=1\,$.

Now assume that $\rho(\beta,\lambda)=\max_{p=1,\dots,K-1}\beta_p^2\equiv\hat\beta^2$. In particular the inequality in \eqref{norma2} must be saturated, namely there exists $p\in\{1,\dots,K\}$ such that
\be 
4\sum_{p'=p-2}^{p+1}b_{p'}^{(p)}\,\lambda_{p'}\,\lambda_{p'+1} \,=\, \hat\beta^4 \;.
\ee
Then \eqref{bestlambda1} or \eqref{bestlambda2} follow from Lemma \ref{lem:diseq}.

On the other hand assume that condition \eqref{bestlambda1} or \eqref{bestlambda2} holds true. In order to prove that $\rho(\beta,\lambda)=\hat\beta^2$, it suffices to show that $x=\hat\beta^2$ is a zero of the matching polynomial $\DDelta_K\!\big(x,t(\beta,\lambda)\big)$, where the activities vector $t(\beta,\lambda)$ is defined by \eqref{eq:t}.
Now condition \eqref{bestlambda2} implies that
\be 
\DDelta_K\!\big(\hat\beta^2,\,t(\beta,\lambda)\big) \,=\, \hat\beta^{2K}\,\big(1-4\,\lambda_{p^*}\lambda_{p^*\!+1}\big) \,=\, 0 \;;
\ee
while condition \eqref{bestlambda1} implies that
\be 
\DDelta_K\!\big(\hat\beta^2,\,t(\beta,\lambda)\big) \,=\, \hat\beta^{2K}\,\big(1-4\,\lambda_{p^*\!-1}\lambda_{p^*}-4\,\lambda_{p^*}\lambda_{p^*\!+1}\big) \,=\, 0 \;.
\ee
This concludes the proof of Proposition \ref{prop:rhobound} part \textit{i)}. 
In order to prove part \textit{ii)}, we observe that the matrix $M(\beta,\lambda)$ has non-negative entries, therefore its spectral radius $\rho(\beta,\lambda)$ is a non-decreasing function of its entries.
\end{proof}

\section{The replica symmetric ansatz for the DBM} \label{Sec:RS}

In this section we derive a replica symmetric expression for the pressure of the DBM.
We show that at zero magnetic field the annealed region $A_K$ identified by Theorem \ref{annt} and Proposition \ref{annp} is the only region where the annealed solution is stable for the replica symmetric consistency equation.
Finally we prove the uniqueness of the solution of the replica symmetric consistency equation, under the hypothesis of Gaussian centred external fields.



Let $q=(q_p)_{p=1,\dots,K}\in[0,1]^K\,$. Consider the matrices $M=M(\beta,\lambda)$, $M_1=M_1(\beta,\lambda)$ defined by \eqref{MK}, \eqref{M1} respectively.
For $p=1,\dots,K$ we have
\be \label{Mq}
\big(M q\big)_p \,=\,
2\,q_{p-1}\,\lambda_{p-1}\,\betapm^2 +\, 2\,\betap^2\,\lambda_{p+1}\,q_{p+1}
\ee
where $\beta_0=\beta_K=\lambda_0=\lambda_{K+1}=q_0=q_{K+1}\equiv0\,$ for convenience. We have
\be
\frac{1}{2}\,q^T M_1\,q \,=\, \sum_{p=1}^{K-1}\lambda_p\,\beta_p^2\,\lambda_{p+1}\, q_p\,\,q_{p+1} \;.
\ee


\begin{definition} \label{def:rs}
For every $q=(q_p)_{p=1,\dots,K}\in[0,1]^K$ the \textit{replica symmetric functional} of the DBM is
\be\label{PRS}
\begin{split}
\pRSN(q) \,\equiv\  & 
\sum_{p=1}^{K} \lambda_p^{(N)}\; \E\log\cosh\left(z\,\sqrt{\big(M^{(N)}\,q\big)_p\,}+h^{(p)}\right) \,+\\
& +\, \frac{1}{2}\,(1-q)^T\,M_1^{(N)}\,(1-q) \,+\, \log 2 
\end{split}
\ee
where $z$ is a standard Gaussian random variable independent of $h$ and $M^{(N)}\equiv M(\beta
,\lambda^{(N)})\,$, $M_1^{(N)}\equiv M_1(\beta
,\lambda^{(N)})$ are tridiagonal matrices defined by \eqref{MK}, \eqref{M1} respectively. 
The limit of the functional as $N\to\infty$ is
\be\label{PRSlimit}
\begin{split}
\pRS(q;\,\beta,\lambda,h) \,\equiv\ & 
\sum_{p=1}^{K} \lambda_p\; \E\log\cosh\left(z\,\sqrt{\big(M q\big)_p\,}+h^{(p)}\right) \,+\\
& +\, \frac{1}{2}\,(1-q)^T\,M_1\,(1-q) \,+\, \log 2 
\end{split}
\ee
where $M=M(\beta,\lambda)$ and $M_1=M_1(\beta,\lambda)\,$.
\end{definition}

Definition \ref{def:rs} is motivated by the following

\begin{proposition}
For every $q=(q_p)_{p=1,\dots,K}\in[0,1]^K$
\be\label{ftc}
\pDBMN \,=\, \pRSN(q) \,-\,
\frac{1}{2}\, \int_0^1 \Big\langle \big(q_{\Lambda_N}-q\big)^T M_1^{(N)} \big(q_{\Lambda_N}-q\big) \Big\rangle_{N,t} \,dt
\ee
where $q_{\Lambda_N}\equiv\big(q_{L_p}(\sigma,\tau)\big)_{p=1,\dots,K}$ and $\langle\,\cdot\,\rangle_{N,t}$ denotes the quenched Gibbs expectation associated to a suitable Hamiltonian. 
\end{proposition}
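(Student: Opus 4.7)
The plan is to use a Guerra-style interpolation between the DBM Hamiltonian at $t=1$ and a one-body system at $t=0$ whose quenched pressure reproduces the non-quadratic part of $\pRSN(q)$. Introduce i.i.d.\ standard Gaussians $(z_i)_{i=1,\dots,N}$, independent of the $J_{ij}$'s and $h_i$'s, and set
\be
\mathcal H_N(\sigma;t) \,\equiv\, \sqrt{t}\;H_{\Lambda_N}(\sigma) \,-\, \sqrt{1-t}\;\sum_{p=1}^K\sum_{i\in L_p} z_i\,\sqrt{\big(M^{(N)}q\big)_p}\;\sigma_i \;.
\ee
Let $\mathcal Z_N(t)$ be the corresponding partition function with the usual fields $h_i\sigma_i$, let $\varphi_N(t)\equiv\frac{1}{N}\E\log\mathcal Z_N(t)$, and take $\langle\,\cdot\,\rangle_{N,t}$ to be the associated two-replica Gibbs expectation (exactly as in \eqref{qexp}).

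I would then verify the two boundary values. At $t=1$ we recover $\varphi_N(1)=\pDBMN$ by inspection. At $t=0$ the spins decouple layer by layer and site by site, so a direct factorization gives
\be
\varphi_N(0) \,=\, \log 2 \,+\, \sum_{p=1}^K \lambda_p^{(N)}\,\E\log\cosh\!\Big(z\,\sqrt{(M^{(N)}q)_p}+h^{(p)}\Big) \,=\, \pRSN(q) \,-\, \tfrac{1}{2}(1-q)^T M_1^{(N)}(1-q) \;.
\ee

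The core step is to compute $\varphi_N'(t)$ by Gaussian integration by parts on the two independent Gaussian families. For $H_{\Lambda_N}$ the covariance \eqref{eq:cov H} yields $\E H_{\Lambda_N}(\sigma)H_{\Lambda_N}(\tau)=N\,q_{\Lambda_N}^T M_1^{(N)} q_{\Lambda_N}$. For the auxiliary field $W_\sigma\equiv \sum_p\sum_{i\in L_p}z_i\sqrt{(M^{(N)}q)_p}\,\sigma_i$, independence of the $z_i$'s together with the identity $\diag(\lambda^{(N)})\,M^{(N)}=2\,M_1^{(N)}$ gives
\be
\E\,W_\sigma\,W_\tau \,=\, \sum_{p=1}^K N_p\,(M^{(N)}q)_p\, q_p(\sigma,\tau) \,=\, 2N\, q_{\Lambda_N}(\sigma,\tau)^T M_1^{(N)} q \;.
\ee
The standard IBP manipulation produces, from the $H_{\Lambda_N}$ term, the contribution $\frac{1}{2}\mathbf{1}^T M_1^{(N)}\mathbf{1}-\frac{1}{2}\E\langle q_{\Lambda_N}^T M_1^{(N)}q_{\Lambda_N}\rangle_{N,t}$, and from the $W$ term $-\mathbf{1}^T M_1^{(N)}q+\E\langle q_{\Lambda_N}^T M_1^{(N)}q\rangle_{N,t}$. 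Summing and completing the square (this is the main bookkeeping step: matching the one- and two-replica pieces so that the constant combines into $\frac{1}{2}(1-q)^T M_1^{(N)}(1-q)$ while the two-replica remainder combines into the quadratic form $(q_{\Lambda_N}-q)^T M_1^{(N)}(q_{\Lambda_N}-q)$) yields
\be
\varphi_N'(t) \,=\, \tfrac{1}{2}(1-q)^T M_1^{(N)}(1-q) \,-\, \tfrac{1}{2}\E\big\langle (q_{\Lambda_N}-q)^T M_1^{(N)}(q_{\Lambda_N}-q)\big\rangle_{N,t} \;.
\ee
Writing $\pDBMN=\varphi_N(0)+\int_0^1 \varphi_N'(t)\,dt$ and inserting the expression for $\varphi_N(0)$ gives exactly \eqref{ftc}. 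The only delicate point, and the one I would double-check carefully, is the algebraic completion of the square that matches the constant $\frac{1}{2}(1-q)^T M_1^{(N)}(1-q)$ coming from $\varphi_N(0)$ with its counterpart from $\int_0^1\varphi_N'$, so that the two additive constants cancel and only the overlap remainder survives.
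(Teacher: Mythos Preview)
Your proposal is correct and follows essentially the same approach as the paper: the same one-body interpolation \eqref{Hinter_rs}, the same boundary identifications \eqref{t1_rs}--\eqref{t0_rs}, and the same Gaussian integration by parts leading to \eqref{deriv_rs}. Your write-up in fact spells out the covariance identity $\diag(\lambda^{(N)})\,M^{(N)}=2\,M_1^{(N)}$ and the completion-of-the-square step more explicitly than the paper does, so the ``delicate point'' you flag is already handled.
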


\begin{proof}
Let $q\in[0,\infty)^K$. For every $p=1,\dots, K$ we consider a one-body model over the $N_p$ spin variables indexed by the layer $L_p$ at inverse temperature $\sqrt{(M^{(N)}q)_p}\,$ and external fields distributed as $h^{(p)}$.
For $\sigma\in\{-1,1\}^N$ and $t\in[0,1]$ we define an interpolating Hamiltonian as follows:
\be \label{Hinter_rs}
\mathcal H_N(\sigma,t) \,\equiv\,
\sqrt{t}\; H_{\Lambda_N}(\sigma) \,+\,
\sum_{p=1}^K \,\sum_{i\in L_p} \left(z_i\, \sqrt{(1-t)\,(M^{(N)}q)_p}\,+h_i\right)\sigma_i
\ee
where $z_i$, $i\in L_p$, $p=1,\dots, K$ are i.i.d. standard Gaussian random variables, independent also of $h_i$'s and $J_{ij}$'s.
The interpolating pressure is
\be
\varphi_N(t) \,\equiv\, \frac{1}{N}\,\E\,\log\,\sum_{\sigma} e^{-\mathcal H_{N}(\sigma,t)} \ .
\ee
Observe that the quenched pressure of the DBM and a convex combination of quenched pressures of one-body models are recovered for $t=1$, $t=0$ respectively:
\begin{align} \label{t1_rs}
& \varphi_{N}(1) \,=\, \pDBMN \;,\\
\label{t0_rs}
& \varphi_{N}(0) \,=\,
\log 2 \,+\, \sum_{p=1}^K \lambda_p^{(N)}\; \E\log\cosh\left(z\,\sqrt{(M^{(N)}q)_p\,}+h^{(p)}\right) \;.
\end{align}
Gaussian integration by parts leads to the following result:
\be\label{deriv_rs}
\frac{d\varphi_N}{dt}\,(t) \,=\,
\frac{1}{2}\,(1-q)^T M_1^{(N)} (1-q) \,-\, \frac{1}{2}\,\Big\langle \big(q_{\Lambda_N}-q\big)^T M_1^{(N)} \big(q_{\Lambda_N}-q\big) \Big\rangle_{N,t}
\ee
where 
$\langle\,\cdot\,\rangle_{N,t}$ denotes the quenched Gibbs expectation associated to the Hamiltonian $\mathcal H_N(\sigma,t)+\mathcal H_N(\tau,t)$.
Therefore \eqref{ftc} follows by \eqref{t1_rs}, \eqref{t0_rs}, \eqref{deriv_rs} concluding the proof.
\end{proof}

We say that the DBM is in the \textit{replica symmetric regime} when there exists $q^*$ stationary point of $\pRS(q)$ such that $\lim_{N\to\infty}\pDBMN = \pRS(q^*)\,$. 

\begin{remark} \label{rk:RSstationary}
$q=(q_p)_{p=1,\dots,K}$ is a stationary point of $\pRS$ if and only if
\be \label{RScons}
M_1 \cdot \left(q_p-\E\tanh^2\left(z\,\sqrt{(Mq)_p\,}+h^{(p)}\right)\right)_{p=1,\dots,K} \,=\, 0
\ee
where the matrices $M=M(\beta,\lambda)$, $M_1=M_1(\beta,\lambda)$ are defined by \eqref{MK}, \eqref{M1} respectively and $z$ is a standard Gaussian random variable independent of $h$.
Indeed Gaussian integration by parts allows to compute $\frac{\partial}{\partial q_p}\pRS$ from definition \eqref{PRSlimit}.
\end{remark}

\begin{remark} 
For $h=0$ observe that $q=0$ is a solution of \eqref{RScons} and the replica symmetric functional computed at this stationary point equals the annealed pressure of the DBM:
\be
\pRS\big(q=0;\,\beta,\lambda,h=0\big) \,=\, \pADBM(\beta,\lambda) \;.
\ee
\end{remark}

\begin{proposition} \label{stabilityprop}
Set $F:[0,1]^K\to[0,1]^K$, $F_p(q) \,\equiv\, \E\tanh^2\!\left(z\,\sqrt{(Mq)_p}\right)$
for every $p=1,\dots,K$.
The region of parameters $(\beta,\lambda)$ such that the annealed solution $q=0$ is a stable solution of the replica symmetric consistency equation $q=F(q)$
coincides with the region $A_K$ introduced in Section \ref{sec:ann}. Precisely:
\be\label{stability}
|x|<1\ \;\forall\,x\,\text{eigenvalue of }\Jac F\Big|_{q=0} \quad \Leftrightarrow\quad (\beta,\lambda)\in A_K\;.
\ee
\end{proposition}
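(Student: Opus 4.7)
The plan is to identify the Jacobian of $F$ at $q=0$ with the tridiagonal matrix $M(\beta,\lambda)$ from Remark~\ref{rk:MK}, so that the spectral condition on the left of \eqref{stability} reduces to $\rho(\beta,\lambda)<1$; this is then precisely the defining condition for $(\beta,\lambda)\in A_K$ by Proposition~\ref{annp}.

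To carry this out, I would first write $F_p(q) = \phi\bigl((Mq)_p\bigr)$ where $\phi(A)\equiv \E\tanh^2(z\sqrt{A})$ for $A\geq 0$ (well-defined since $M$ has non-negative entries and $q\in[0,1]^K$). A short Taylor expansion $\tanh^2(x) = x^2 + O(x^4)$ as $x\to 0$ gives $\phi(A) = \E[z^2]\,A + O(A^2) = A + O(A^2)$, so in particular $\phi'(0)=1$. The chain rule then yields $(\Jac F)_{pr}\big|_{q=0} = \phi'(0)\,M_{pr} = M_{pr}$, i.e.\ $\Jac F|_{q=0} = M(\beta,\lambda)$.

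Once this identification is established, Remark~\ref{rk:MK} tells us that the spectral radius of $M(\beta,\lambda)$ is exactly $\rho(\beta,\lambda)$, so the condition that every eigenvalue $x$ of $\Jac F|_{q=0}$ satisfies $|x|<1$ becomes simply $\rho(\beta,\lambda)<1$. The equivalence (i)$\Leftrightarrow$(iii) of Proposition~\ref{annp} then gives $(\beta,\lambda)\in A_K$, closing the circle.

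The only delicate point is the justification of $\phi(A)=A+O(A^2)$ together with the exchange of the derivative and the Gaussian expectation. This should be routine, since $|\tanh^2|\leq 1$ and the remainder in the Taylor expansion of $\tanh^2(z\sqrt{A})$ is dominated uniformly in $A$ on compact sets by an integrable multiple of $z^4$. I do not anticipate any genuine obstacle here.
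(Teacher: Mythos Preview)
Your proposal is correct and follows essentially the same route as the paper: identify $\Jac F|_{q=0}=M(\beta,\lambda)$, then invoke Remark~\ref{rk:MK} and Proposition~\ref{annp}. The only cosmetic difference is that the paper computes $\phi'(0)$ via Gaussian integration by parts (which gives directly $\frac{d}{dA}\E\tanh^2(z\sqrt{A})=\frac{1}{2}\,\E\big[(\tanh^2)''(z\sqrt{A})\big]$, hence value $1$ at $A=0$), whereas you obtain the same value from the Taylor expansion of $\tanh^2$; both are equally valid and equally short.
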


\begin{proof}
Gaussian integration by parts allows to compute the derivatives of $F$ with respect to $q$, leading to
\be
\Jac F\,\Big|_{q=0} \,=\, M \;.
\ee
Therefore \eqref{stability} follows immediately by Proposition \ref{annp} and Remark \ref{rk:MK}.
\end{proof}

When the matrix $M_1$ is invertible, the replica symmetric equation \eqref{RScons} rewrites as:
\be \label{RScons2}
q_p \,=\, \E\tanh^2\left(z\,\sqrt{(Mq)_p\,}+h^{(p)}\right) \quad\forall\ p=1,\dots,K \;.
\ee
The problem of uniqueness of the solution of \eqref{RScons2} has been proposed by Panchenko in \cite{PanchenkoMSK} for the convex case (where $M$ is replaced by a positive definite matrix) and solved in \cite{BSS} for $K=2$. In the following we prove the uniqueness for the deep case (our matrix $M$ is highly non-definite) under the assumption of Gaussian centred external fields.
Denote $T_K^+ \,\equiv\, \{ (\lambda_1,\dots,\lambda_K) \in (0,1]^K \,|\, \sum_{p=1}^K\lambda_p=1 \}\,$.

\begin{theorem} \label{teo:unique}
Let $h^{(p)}$, $p=1,\dots,K$ be centered Gaussian variables with variance $v_p>0$ respectively.
Let $\lambda\in T_K^+$ and $\beta\in\R_+^{K-1}$.
The consistency equation \eqref{RScons2}, which rewrites as
\be \label{RSconsGauss}
q_p\,=\,\E \tanh^2\left(z\,\sqrt{(M q)_p+v_p}\,\right) \qquad \forall\,p=1,\dots,K
\ee
with $M=M(\beta,\lambda)$ defined in \eqref{MK}, has a unique solution.
\end{theorem}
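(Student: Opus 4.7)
The plan is to reformulate the consistency equation \eqref{RSconsGauss} as a fixed point problem $q = T(q)$ on $[0,1]^K$, where $T_p(q) \equiv \psi((Mq)_p + v_p)$ and $\psi(x) \equiv \mathbb{E}\tanh^2(z\sqrt{x})$ with $z$ a standard Gaussian. Uniqueness will then follow from a Krasnoselskii-type argument for monotone, componentwise-concave maps satisfying $T(0) > 0$ strictly.

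The key analytic input is that $\psi$ is non-decreasing and concave on $[0,\infty)$ with $\psi(0)=0$. Using the symmetry $z \leftrightarrow -z$, I would rewrite $\mathbb{E}\,\mathrm{sech}^2(z\sqrt{x}) = \sqrt{2/\pi}\int_0^\infty e^{-u^2/2}\,\mathrm{sech}^2(u\sqrt{x})\,du$. Monotonicity of $\psi = 1 - \mathbb{E}\,\mathrm{sech}^2(z\sqrt{x})$ then reduces to the pointwise monotonicity of $x \mapsto \mathrm{sech}^2(u\sqrt{x})$ for each $u \geq 0$, which is evidently decreasing. For concavity, a chain-rule computation gives $\partial_x^2\,\mathrm{sech}^2(u\sqrt{x}) = \frac{u\,\mathrm{sech}^2(y)}{2\,x^{3/2}}\,h(y)$, where $y = u\sqrt{x}$ and $h(y) = \tanh y + y(3\tanh^2 y - 1)$. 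Since $h(0)=0$ and $h'(y) = 2\tanh^2 y + 6y\tanh y\,\mathrm{sech}^2 y \geq 0$ for $y \geq 0$, one has $h \geq 0$, so the integrand is convex in $x$ and $\psi$ is concave. This concavity calculation is the main technical step.

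Given these properties of $\psi$, each $T_p$, being $\psi$ composed with the affine map $q \mapsto (Mq)_p + v_p$ (non-decreasing in each $q_r$ since $M$ has non-negative entries), is monotone non-decreasing and concave in $q$; moreover $T_p(0) = \psi(v_p) > 0$ strictly since $v_p > 0$. To conclude uniqueness, I would suppose $q, q'$ are two fixed points (both strictly positive, since $q_p = T_p(q) \geq \psi(v_p)>0$) and set $\mu \equiv \max_p q_p/q'_p$, attained at some index $p^*$. If $\mu > 1$, componentwise concavity of $T$ applied to $q' = \frac{1}{\mu}(\mu q') + (1-\frac{1}{\mu})\cdot 0$, together with $T(0) > 0$, yields $T(\mu q') \leq \mu T(q') - (\mu-1) T(0) < \mu T(q')$ componentwise strict. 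Combined with monotonicity of $T$ and $q \leq \mu q'$, this forces $q_{p^*} = T(q)_{p^*} \leq T(\mu q')_{p^*} < \mu T(q')_{p^*} = \mu q'_{p^*} = q_{p^*}$, a contradiction. Hence $\mu \leq 1$, so $q \leq q'$; the symmetric argument gives $q = q'$.
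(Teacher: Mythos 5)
Your argument is correct, but it takes a genuinely different route from the paper's. The paper first reduces the $K$-dimensional system to a chain of scalar problems via the auxiliary variables $a_p$ of \eqref{RSconsq-a} (each layer then sees a single-SK consistency equation at effective inverse temperature $\theta_p(a)$), invokes the Latala--Guerra lemma together with the strict monotonicity of $\qSKRS$ in both arguments (Lemma \ref{lem:monotoniaQ}), and runs an induction along the chain; this exploits the tridiagonal structure of $M$ and is constructive, so it yields existence as well as uniqueness. You instead work directly with the fixed-point map $T_p(q)=\psi\big((Mq)_p+v_p\big)$, $\psi(x)=\E\tanh^2(z\sqrt{x})$, and use only three facts: $\psi$ is non-decreasing, $\psi$ is concave, and $T(0)>0$ because $v_p>0$. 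Your concavity computation is the right one --- conditioning on $z$ gives the pointwise-signed integrand, and indeed $h(y)=\tanh y+y\,(3\tanh^2y-1)$ satisfies $h(0)=0$ and $h'(y)=2\tanh^2y+6y\tanh y\,\mathrm{sech}^2y\geq0$ for $y\geq0$, so each $x\mapsto\mathrm{sech}^2(u\sqrt{x})$ is convex (differentiating in the variance of the Gaussian instead would \emph{not} give a pointwise-signed integrand). The Krasnoselskii step with $\mu=\max_p q_p/q_p'$ then closes correctly. What your route buys is generality: it never uses the chain structure of $M$, only the non-negativity of its entries, so it applies verbatim to the multi-species setting of Panchenko mentioned before the theorem (for centred Gaussian fields with positive variances), and it does not even need $\lambda\in T_K^+$. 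Two small additions are needed: the theorem also asserts existence, which your argument does not give --- one line suffices, since $T$ is a continuous self-map of $[0,1]^K$ and Brouwer applies; and you should note that $T$ extends to a monotone concave map on all of $[0,\infty)^K$, since $\mu q'$ may leave $[0,1]^K$.
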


The proof of Theorem \ref{teo:unique} relies on the following

\begin{lemma} \label{lem:monotoniaQ}
Let $h$ be a centered Gaussian variable with variance $v>0$. Let $\beta>0$. Then equation
\be \label{ceSK-RSGauss}
q\,=\,\E \tanh^2\left(z\,\sqrt{2\,q\,\beta^2+v}\,\right)
\ee
has a unique solution that we denote by $\qSKRS(\beta,v)>0\,$.
The function $\qSKRS$ is strictly increasing with respect to both $\beta$ and $v$.
\end{lemma}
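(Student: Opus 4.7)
My plan is to exploit the Latała--Guerra style trick that reduces uniqueness to the strict monotonicity of a certain ratio. Introduce $G(q) \equiv \E\tanh^2\bigl(z\sqrt{2q\beta^2+v}\bigr)$ on $[0,1]$ and $\psi(\alpha) \equiv \E\tanh^2(z\sqrt{\alpha})$, so that \eqref{ceSK-RSGauss} reads $q=G(q)$ with $G(q)=\psi(2q\beta^2+v)$. Existence is routine: $G$ is continuous, $G(0)=\psi(v)>0$ because $v>0$, and $G(1)=\psi(2\beta^2+v)<1$, so the intermediate value theorem yields a zero of $q-G(q)$ in $(0,1)$, which is automatically positive.

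The heart of the argument will be a pointwise calculus identity: for every $y\neq 0$ the map $s\mapsto \tanh^2(\sqrt{s}\,y)/s$ is strictly decreasing on $(0,\infty)$. Direct differentiation gives
\[
\frac{d}{ds}\frac{\tanh^2(\sqrt{s}\,y)}{s} \,=\, \frac{\tanh(x)}{s^2}\,\bigl(x\,\mathrm{sech}^2 x - \tanh x\bigr), \qquad x=\sqrt{s}\,y.
\]
The auxiliary function $\phi(x)\equiv x\,\mathrm{sech}^2 x - \tanh x$ is odd with $\phi(0)=0$ and $\phi'(x)=-2x\,\mathrm{sech}^2 x\,\tanh x<0$ for $x>0$, hence $\phi(x)<0$ for $x>0$ and $\phi(x)>0$ for $x<0$; consequently $\tanh(x)\,\phi(x)<0$ for every $x\neq 0$. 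Taking expectation (with a dominated convergence interchange) I obtain that $\alpha\mapsto \psi(\alpha)/\alpha$ is strictly decreasing on $(0,\infty)$, so in particular $\alpha\,\psi'(\alpha)-\psi(\alpha)<0$.

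Uniqueness then falls out. With $\alpha=2q\beta^2+v$ one has $G(q)/q=2\beta^2\,\psi(\alpha)/(\alpha-v)$, whose derivative in $\alpha$ has numerator
\[
\psi'(\alpha)(\alpha-v)-\psi(\alpha) \,=\, \bigl(\alpha\,\psi'(\alpha)-\psi(\alpha)\bigr) - v\,\psi'(\alpha),
\]
which is strictly negative: the first summand is negative by the previous paragraph and the second is non-positive because $v>0$ and $\psi'\geq 0$ (strict monotonicity of $\psi$). Hence $q\mapsto G(q)/q$ strictly decreases from $+\infty$ at $0^+$ down past $G(1)<1$, so the equation $G(q)/q=1$ has a unique root $\qSKRS(\beta,v)\in(0,1)$.

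Finally I get strict monotonicity in $\beta$ and $v$ by a direct comparison, which sidesteps any issue with the sign of $\psi'$ at the fixed point. The pointwise strict increase of $\alpha\mapsto\tanh^2(z\sqrt\alpha)$ for $z\neq 0$ makes $\psi$ strictly increasing, so for $\beta_1<\beta_2$ and any $q>0$ one has $G_{\beta_1,v}(q)<G_{\beta_2,v}(q)$. Evaluating at $q^*_1\equiv \qSKRS(\beta_1,v)$ yields $G_{\beta_2,v}(q^*_1)/q^*_1>1$, which together with the strict decrease of $q\mapsto G_{\beta_2,v}(q)/q$ forces $q^*_1<\qSKRS(\beta_2,v)$; the same comparison with $v_1<v_2$ delivers monotonicity in $v$. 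The main obstacle throughout is the pointwise sign analysis of $\phi$; once that single elementary calculus step is in hand, the rest of the proof is mechanical bookkeeping.
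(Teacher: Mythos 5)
Your proof is correct. For uniqueness you and the paper run the same Latala--Guerra argument: both reduce the problem to showing that $q\mapsto q^{-1}\,\E\tanh^2\bigl(z\sqrt{2q\beta^2+v}\bigr)$ is strictly decreasing, and both ultimately rest on the single sign fact $\tanh(x)\,\bigl(x\,\mathrm{sech}^2x-\tanh x\bigr)<0$ for $x\neq0$ (the paper phrases it as $\phi(y)-y\,\phi'(y)$ having the sign of $y$, obtained from the concavity of $\tanh$ on $\R_+$; you obtain the same inequality by differentiating your auxiliary function, which is the same quantity up to sign). Your packaging through the pointwise monotonicity of $s\mapsto\tanh^2(\sqrt{s}\,y)/s$ is a reorganization of the paper's derivative computation, which splits $q^2\,f'(q)$ into the two negative terms $-\E[\phi(y)(\phi(y)-y\phi'(y))]$ and $-\tfrac{v}{2q\beta^2+v}\,\E[y\,\phi(y)\,\phi'(y)]$; your decomposition $\bigl(\alpha\psi'(\alpha)-\psi(\alpha)\bigr)-v\,\psi'(\alpha)$ is exactly the same split. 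Where you genuinely diverge is the monotonicity of $\qSKRS$ in $\beta$ and $v$: the paper differentiates the fixed-point equation implicitly in $\beta^2$ (respectively $v$) and checks that the resulting quotient is positive, reusing the same two expectations; you instead use a comparison argument (the family $G_{\beta,v}$ is pointwise strictly increasing in $\beta$ and $v$, and $G(q)/q$ is strictly decreasing, so the unique crossing point can only move up). Your route is slightly more economical, since it avoids justifying differentiability of the implicitly defined $\qSKRS$ and recycles the already-established monotonicity of the ratio, whereas the paper's yields an explicit formula for $d\qSKRS/d\beta^2$ as a by-product.
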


The uniqueness part in Lemma \ref{lem:monotoniaQ} is the well-known Latala-Guerra's lemma \cite{Tala}. The monotonicity part is based on a similar argument.
Whereas the uniqueness property holds true for much more general choices of the external field $h$, we notice that the monotonicity property in $\beta$ is lost for deterministic (large enough) $h$.

\begin{proof}[Lemma \ref{lem:monotoniaQ}]
\ Set $f(q)\equiv q^{-1}\,\E \tanh^2(z\,\sqrt{2\,q\,\beta^2+v}\,)$ for $q>0$. To prove that \eqref{ceSK-RSGauss} has a unique solution it suffices to show that $f$ is strictly decreasing. Now taking the derivative of $f$ (avoiding Gaussian integration by parts) leads to:
\be \label{appo}
q^2\,\frac{df}{d q} \,=\,
-\,\E\left[\phi(y)\,\left(\phi(y)- y\,\phi'(y)\,\right)\right]
-\, \frac{v}{2\,q\,\beta^2+v}\;\E\left[y\,\phi(y)\,\phi'(y)\right]
\ee
where $\phi(y)\equiv\tanh y$ and $y\equiv z\,\sqrt{2\,q\,\beta^2+v}\,$.
Since $\phi$ is odd, strictly positive on $\R_+$, strictly increasing on $\R$ and strictly concave on $\R_+$, it follows that the functions inside each expectation in \eqref{appo} are strictly positive for $y\neq0\,$. In particular observe that $\sign\phi(y)=\sign y$ and that
\be
\frac{d}{dy}\,\big(\phi(y)-y\,\phi'(y)\big) = -y\,\phi''(y) \,>\,0\ \ \Rightarrow\ \  \sign\big(\phi(y)-y\,\phi'(y)\big) = \sign y \;.
\ee
Therefore $\frac{df}{dq}<0$, proving uniqueness of the solution of equation \eqref{ceSK-RSGauss}.

Now let's prove that the solution $\qSKRS$ is strictly increasing with respect to $\beta>0$. Taking the derivative with respect to $\beta^2$ on both sides of \eqref{ceSK-RSGauss} (avoiding integration by parts), one finds:
\be \label{appo2}
\frac{d\qSKRS}{d\beta^2} \,=\, \frac{\E\big[Y\phi(Y)\,\phi'(Y)\big]}{2\,\beta^2\,\qSKRS+v}\; \left(2\,\beta^2\,\frac{d\qSKRS}{d\beta^2}+\,2\qSKRS\right)
\ee
where $Y\equiv z\,\sqrt{2\beta^2\qSKRS+v}\,$. Reordering terms and replacing $\qSKRS$ by $\E\,\phi(Y)^2$ leads to:
\be
\frac{d\qSKRS}{d\beta^2} \,=\, \frac{ \E\big[Y\phi(Y)\,\phi'(Y)\big]\;2\qSKRS}{v+2\,\beta^2\,\E\big[\phi(Y)\,\left(\phi(Y)- Y\,\phi'(Y)\,\right)\big] } \,>0\;.
\ee
In a similar way one can prove that $\qSKRS$ is strictly increasing with respect to $v$, indeed:
\be
\frac{d}{d v}\qSKRS \,=\, \frac{\E\big[Y\phi(Y)\,\phi'(Y)\big]}{v+2\,\beta^2\,\E\big[\phi(Y)\,\left(\phi(Y)- Y\,\phi'(Y)\,\right)\big] } \,>0\;.
\ee
\end{proof}

\begin{proof}[Theorem \ref{teo:unique}]
A key observation is that the system \eqref{RSconsGauss} is equivalent to the following:
\be \label{RSconsq-a}
\begin{cases}
\,q_p \,=\, \E\tanh^{2}\left(z\,\sqrt{2\,q_p\,\theta_p(a)^2+v_p}\,\right) & p=1,\dots,K\\[8pt]
\,\lambda_p\,q_p\;a_p \,=\, \lambda_{p+1}\,q_{p+1} & p=1,\dots,K-1
\end{cases}
\ee
where we have introduced the auxiliary variables $a_1,\dots,a_{K-1}>0\,$.
This can be easily checked by comparing definitions \eqref{gammap} and \eqref{Mq}.
By Lemma \ref{lem:monotoniaQ}, the first line of \eqref{RSconsq-a} entails
\be
q_p \,=\, \qSKRS\!\big(\theta_p(a),v_p\big) \quad\forall\,p=1,\dots,K
\ee
where $\qSKRS$ is uniquely defined and strictly increasing with respect to both arguments.
On the other hand the second line of \eqref{RSconsq-a} rewrites as
\be
\lambda_1\,q_1\;\prod_{l=1}^p a_l \,=\, \lambda_{p+1}\,q_{p+1} \quad\forall\,p=1,\dots,K-1 \;.
\ee
Therefore in order to prove the Theorem it suffices to prove uniqueness of the solution $a\in\R_+^{K-1}$ of the following system:
\be \label{RSconsA}
\lambda_1\,\qSKRS\!\big(\theta_1(a),v_1\big)\,\prod_{l=1}^{p} a_l \,=\, \lambda_{p+1}\,\qSKRS\!\big(\theta_{p+1}(a),v_{p+1}\big) \quad\forall\,p=1,\dots,K-1 \,.
\ee
It is convenient to set $Q_1(a_1)\,\equiv\, \lambda_1\,\qSKRS\big(\lambda_1\,\beta_1^2\,a_1,v_1\big)$ and for every $p\geq2$
\be 
Q_p\bigg(\frac{1}{a_{p-1}}\,,a_{p}\bigg) \,\equiv\, 
\lambda_p\,\qSKRS\bigg(\lambda_p\,\frac{\beta_{p-1}^2}{a_{p-1}}+\lambda_p\,\beta_p^2\,a_p\,,\,v_p\bigg) \;.
\ee
We are going to prove by induction on $p\geq1$ that for any given $a_{p+1}\geq0$ there exists a unique $a_p\,=\,a^*_p(a_{p+1})>0$ such that
\be \label{starp}\begin{cases}
\;a_l = a_l^*(a_{l+1}) \quad\forall\;l=1,\dots,p-1\\[6pt]
\;Q_1(a_1)\;a_1\,\cdots\,a_{p-1}\,a_p \,=\, Q_{p+1}\bigg(\dfrac{1}{a_p}\,,a_{p+1}\bigg) 
\end{cases} \ee
and moreover $a_p^*$ is strictly increasing with respect to $a_{p+1}\,$. The uniqueness of solution of \eqref{RSconsA} will follow immediately by stopping at $p=K-1$ and choosing $a_K=0\,$.\\
$\bullet$ Case $p=1$: given $a_2\geq0$, let's consider the equation
\be \label{star1}
Q_1(a_1)\,a_1 \,=\, Q_2\bigg(\frac{1}{a_1},a_2\bigg) \;.
\ee
By Lemma \ref{lem:monotoniaQ} the left-hand side of \eqref{star1} is a strictly increasing function of $a_1>0$ and takes all the values in the interval $(0,\infty)$, while the right-hand side is a decreasing function of $a_1>0$ and takes non-negative values.
Therefore there exists a unique $a_1=a_1^*(a_2)>0$ solution of \eqref{star1}.
Now taking derivatives on both sides of \eqref{star1} and using again Lemma \ref{lem:monotoniaQ}, one finds:
\be 
\frac{d a_1^*}{d a_2} \,=\,
\frac{\partial}{\partial a_2}Q_2\Big(\frac{1}{a_1},a_2\Big) \, \Bigg[\frac{\partial}{\partial a_1}\big(Q_1(a_1)\,a_1\big) - \frac{\partial}{\partial a_1}Q_2\Big(\frac{1}{a_1},a_2\Big) \Bigg]^{-1}_{|a_1=a_1^*(a_2)} >0 
\ee
hence $a_1^*$ is a strictly increasing function of $a_2\,$.\\
$\bullet$ For $p>1\,$, $p-1$ $\Rightarrow$ $p$. Fix $a_{p+1}\geq0\,$.
By inductive hypothesis $a_1^*,\dots,a_{p-1}^*$ are well-defined and strictly increasing functions.
Defining the composition $A_l^*\equiv a_l^*\circ\dots\circ a_{p-1}^*$ for every $l=1,\dots,p-1$, equation \eqref{starp} rewrites as:
\be \label{starp2}
\big(Q_1\circ A_1^*\big)(a_p) \, \prod_{l=1}^{p-1}\!A_l^*(a_p)\; a_p \,=\, Q_{p+1}\bigg(\frac{1}{a_p},a_{p+1}\bigg) \;.
\ee
By inductive hypothesis and Lemma \ref{lem:monotoniaQ}, the left-hand side of \eqref{starp2} is a strictly increasing function of $a_p>0$ and takes all the values in the interval $(0,\infty)$, while the right hand-side of \eqref{starp2} is a decreasing function of $a_p>0$ and takes non-negative values. Therefore for every $a_{p+1}\geq0$ there exists a unique $a_p=a_p^*(a_{p+1})>0$ solution of \eqref{starp2}.
Now taking derivatives on both sides of \eqref{starp2} one finds:
\be \begin{split}
\frac{d a_p^*}{d a_{p+1}} = & \;
\frac{\partial}{\partial a_{p+1}}Q_{p+1}\Big(\frac{1}{a_p},a_{p+1}\Big)\; \cdot\\
&\cdot\Bigg[\frac{\partial}{\partial a_p}\bigg(\!\big(Q_1\circ A_1^*\big)(a_p) \prod_{l=1}^{p-1}A_l^*(a_p)\,a_p\bigg) - \frac{\partial}{\partial a_p}Q_{p+1}\Big(\frac{1}{a_p},a_{p+1}\Big) \Bigg]^{-1}_{|a_p=a_p^*(a_{p+1})}
\end{split} \ee
which, using again the inductive hypothesis and Lemma \ref{lem:monotoniaQ}, entails that $a_p^*$ is a strictly increasing function of $a_{p+1}\,$.
\end{proof}

\section{A replica symmetric bound for the DBM} \label{sec:rs bound}
In this section a lower bound for the quenched pressure of the DBM in terms of the replica symmetric functional is provided in a suitable region of the parameters $\beta,\,\lambda,\,h$.
For centred Gaussian external fields this region is defined though a system of $K$ inequalities which mimic the Almeida-Thouless condition for the SK model.

By Theorem \ref{maint} we can investigate the replica symmetric regime of the DBM relying on the established results for the replica symmetric regime of the SK model.
Denote by $\pSKRS$ the replica symmetric functional of an SK model, namely for every $q\in[0,1]$, $\beta>0$, $h$ real random variable with $\E\,|h|<\infty$,
\be\label{SK-RS}
\pSKRS(q;\,\beta,h) \,\equiv\, \E \log\cosh\left(z\,\sqrt{2\,q\,\beta^2}\,+h\right) \,+\, \frac{\beta^2}{2}\,(1-q)^2 \,+\, \log 2 
\ee
where $z$ is a standard Gaussian random variable independent of $h$.
Stationary points of $\pSKRS$ are identified by the consistency equation
\be \label{ceSK-RS}
q \,=\, \E\tanh^2\left(z\,\sqrt{2\,q\,\beta^2}\,+h\right)
\ee
where $z$ is a standard Gaussian r.v. independent of $h$.
%
The celebrated Guerra's bound \cite{Guerra} states in particular that
\be \label{boundSK-RS}
\pSK(\beta,h) \,\leq\, \inf_q \pSKRS(q;\,\beta,h) \;.
\ee
for every $\beta,h$. Identifying the exact replica symmetric region of the SK model, where equality in \eqref{boundSK-RS} is achieved, is an open problem.
A first result about the replica symmetric region of the DBM under general (but implicit) conditions is provided by the following

%

\begin{theorem} \label{teo:RS}
For every $q\in[0,1]^K$, $a\in\R_+^K$ related by
\be \label{ceA}
\lambda_p\,q_p\;a_p \,=\, \lambda_{p+1}\,q_{p+1} \quad\forall\,p=1,\dots,K-1
\ee
the following inequality holds true:
\be \label{disDBM-RS}
\PDBM(a;\,\beta,\lambda,h) \,\leq\, \pRS(q;\,\beta,\lambda,h) \;.
\ee
Moreover if the parameters $\beta,\,\lambda,\,h$ are such that there exist $q,\,a$ related by \eqref{ceA} and verifying
\be \label{hypSKRS}
\pSK\big(\theta_p(a),h^{(p)}\big)\,=\,\pSKRS\big(q_p\,;\theta_p(a),h^{(p)}\big) \quad\forall\,p=1,\dots,K \;,
\ee
then equality is achieved in \eqref{disDBM-RS}
and as a consequence
\be \label{boundDBM-RS}
\liminf_{N\to\infty} \pDBMN \,\geq\, \pRS(q;\,\beta,\lambda,h) \;.
\ee
\end{theorem}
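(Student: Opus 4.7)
The plan is to reduce Theorem \ref{teo:RS} to a layer-by-layer application of Guerra's replica symmetric bound \eqref{boundSK-RS} and to recognise the resulting quantity as $\pRS(q)$ via an algebraic identity that is forced by the coupling \eqref{ceA}. Concretely, applying \eqref{boundSK-RS} with $\beta=\theta_p(a)$, $h=h^{(p)}$, and replica symmetric parameter $q_p$ to each of the $K$ SK terms in the definition \eqref{variationalp} of $\PDBM(a)$ yields
\begin{equation*}
\PDBM(a) \,\leq\, \sum_{p=1}^K \lambda_p\,\pSKRS(q_p;\theta_p(a),h^{(p)}) \,-\, \frac{1}{2}\sum_{p=1}^K\lambda_p\,\theta_p(a)^2 \,+\, \sum_{p=1}^{K-1}\lambda_p\,\beta_p^2\,\lambda_{p+1} \;.
\end{equation*}

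The key observation is that, thanks to \eqref{ceA}, the identity $2\,q_p\,\theta_p(a)^2 = (M q)_p$ holds for every $p=1,\dots,K$: substituting $a_p=\lambda_{p+1}q_{p+1}/(\lambda_p q_p)$ and $1/a_{p-1}=\lambda_{p-1}q_{p-1}/(\lambda_p q_p)$ into \eqref{gammap} and comparing with \eqref{Mq} gives the claim immediately (with the convention $\beta_0=\beta_K=\lambda_0=\lambda_{K+1}=q_0=q_{K+1}=0$). This makes the arguments of the hyperbolic cosines in $\sum_p \lambda_p\,\pSKRS(q_p;\theta_p(a),h^{(p)})$ coincide with those in $\pRS(q)$. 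What remains is the purely quadratic identity
\begin{equation*}
\tfrac{1}{2}\sum_p\lambda_p\theta_p(a)^2(1-q_p)^2 \,-\tfrac{1}{2}\sum_p\lambda_p\theta_p(a)^2 \,+\sum_p\lambda_p\beta_p^2\lambda_{p+1} \,=\, \tfrac{1}{2}(1-q)^T M_1 (1-q) \,,
\end{equation*}
which I would check by using $\diag(\lambda)M=2\,M_1$, expanding $(1-q)^T M_1 (1-q) = \mathbf{1}^T M_1 \mathbf{1} - 2\,q^T M_1 \mathbf{1} + q^T M_1 q$, and inserting the identity $2q_p\theta_p(a)^2=(Mq)_p$. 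This is elementary bookkeeping and establishes \eqref{disDBM-RS}.

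For the second part, if \eqref{hypSKRS} holds then Guerra's bound is saturated on every layer, so the chain of inequalities collapses to equalities and $\PDBM(a)=\pRS(q)$. Combining this with Theorem \ref{maint},
\begin{equation*}
\liminf_{N\to\infty}\pDBMN \,\geq\, \sup_{a'\in\R_+^{K-1}}\PDBM(a') \,\geq\, \PDBM(a) \,=\, \pRS(q) \;,
\end{equation*}
which is \eqref{boundDBM-RS}. I do not expect a serious obstacle: the only non-routine ingredient is the identity $2\,q_p\theta_p(a)^2=(Mq)_p$, which is exactly what motivates the form of the constraint \eqref{ceA}; everything else is linear algebra combined with previously established results.
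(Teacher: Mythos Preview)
Your proposal is correct and follows essentially the same route as the paper: the paper also first establishes the identity $2\,q_p\,\theta_p(a)^2=(Mq)_p$ from \eqref{ceA}, then replaces each $\pSK$ in \eqref{variationalp} by $\pSKRS$ via Guerra's bound \eqref{boundSK-RS}, and finally identifies the resulting expression with $\pRS(q)$ using \eqref{SK-RS} and \eqref{PRSlimit}. Your write-up is in fact more explicit than the paper's on the quadratic bookkeeping (the paper merely says ``comparing with the expression \eqref{PRSlimit}''), and your derivation of \eqref{boundDBM-RS} from Theorem \ref{maint} is identical.
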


\begin{proof}
Since $q,\,a$ are related by \eqref{ceA}, it is straightforward to verify that
\be \label{theta-M}
2\,q_p\,\theta_p(a)^2 \,=\, (Mq)_p \quad\forall\,p=1,\dots,K\,.
\ee
%
By Guerra's bound \eqref{boundSK-RS}, substituting $\pSKRS$ to $\pSK$ in the right-hand side of expression \eqref{variationalp} provides an upper bound to $\PDBM(a)\,$.
Now using the expression \eqref{SK-RS} of $\pSKRS$, the relation \eqref{theta-M} and comparing with the expression \eqref{PRSlimit} of $\pRS$, bound \eqref{disDBM-RS} is finally proved.\\
Following the same computations, if \eqref{hypSKRS} holds true then 
\be 
\PDBM(a;\,\beta,\lambda,h) \,=\, \pRS(q;\,\beta,\lambda,h)
\ee
and bound \eqref{boundDBM-RS} then follows by Theorem \ref{maint}.
\end{proof}

More explicit conditions for achieving equality in \eqref{disDBM-RS} and having the replica symmetric bound \eqref{boundDBM-RS} are based on the control of the replica symmetric region in the SK model.
For example it is known that equality in \eqref{boundSK-RS} is achieved for $\beta$ small enough. Precisely in Theorem 1.4.10 of \cite{Tala} Talagrand proves that for every $h$
\be\label{TalaSK}
\pSK(\beta,h) \,=\, \pSKRS(q;\,\beta,h) \quad\text{if } \beta^2<\frac{1}{8}
\ee
where $q$ is the unique solution of \eqref{ceSK-RS} (notice the different parametrisation with respect to \cite{Tala}).

\begin{corollary}
Let $\beta,\,\lambda,\,h$ such that a solution $q$ of the replica symmetric consistency equation \eqref{RScons2} satisfies the inequalities
\be \label{TalaDBM}
(Mq)_p < \frac{1}{4}\,q_p \quad\forall\,p=1,\dots,K
\ee
Then the replica symmetric bound \eqref{boundDBM-RS} holds true.
\end{corollary}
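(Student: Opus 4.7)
The plan is to apply Theorem \ref{teo:RS} directly, using Talagrand's replica-symmetric result \eqref{TalaSK} for each of the $K$ SK contributions. The hypothesis $(Mq)_p < \frac{1}{4}\,q_p$ is precisely engineered to match Talagrand's threshold $\beta^2 < 1/8$ via the identity \eqref{theta-M} of the proof of Theorem \ref{teo:RS}. So I need to produce from $q$ a compatible $a\in\R_+^{K-1}$ satisfying \eqref{ceA} and verify \eqref{hypSKRS}.

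First I would observe that under hypothesis \eqref{TalaDBM} all entries of $q$ are strictly positive. Indeed if $q_{p}=0$ for some $p$ then the right-hand side of \eqref{TalaDBM} vanishes, while $(Mq)_p\geq0$ because $M=M(\beta,\lambda)$ has non-negative entries and $q\in[0,1]^K$, contradicting the strict inequality. Assuming (as is natural) $\lambda\in T_K^+$, I can then define
\be
a_p \,\equiv\, \frac{\lambda_{p+1}\,q_{p+1}}{\lambda_p\,q_p} \,>\,0
\quad \text{for } p=1,\dots,K-1,
\ee
so that relation \eqref{ceA} holds by construction. Degenerate cases where some $\lambda_p$ vanishes can be reduced to this one by erasing the empty layers.

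Next I would use the key identity \eqref{theta-M} derived in the proof of Theorem \ref{teo:RS}: with $a$ chosen as above, $(Mq)_p = 2\,q_p\,\theta_p(a)^2$ for every $p=1,\dots,K$. Thus the assumption \eqref{TalaDBM} becomes
\be
2\,q_p\,\theta_p(a)^2 \,<\, \frac{1}{4}\,q_p
\ \Longleftrightarrow\
\theta_p(a)^2 \,<\, \frac{1}{8} \quad \forall\,p=1,\dots,K .
\ee
Moreover, the replica-symmetric equation \eqref{RScons2} applied to the same $q$ reads
\be
q_p \,=\, \E\tanh^2\!\left(z\,\sqrt{2\,q_p\,\theta_p(a)^2}\,+h^{(p)}\right),
\ee
which is exactly the SK-RS consistency equation \eqref{ceSK-RS} at inverse temperature $\theta_p(a)$ and external field $h^{(p)}$. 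By Talagrand's theorem \eqref{TalaSK}, since $\theta_p(a)^2<1/8$, equation \eqref{ceSK-RS} has a unique solution, and at that solution
\be
\pSK\!\big(\theta_p(a),h^{(p)}\big) \,=\, \pSKRS\!\big(q_p;\,\theta_p(a),h^{(p)}\big) \quad \forall\, p=1,\dots,K.
\ee
This is exactly hypothesis \eqref{hypSKRS} of Theorem \ref{teo:RS}, so the replica-symmetric bound \eqref{boundDBM-RS} follows at once.

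There is essentially no obstacle beyond checking these identifications cleanly; the content of the corollary is the fact, encoded in \eqref{theta-M}, that the single condition $(Mq)_p < q_p/4$ is simultaneously (i) the AT-like threshold that guarantees positivity of $q_p$ and solvability of the auxiliary $a$-system, and (ii) the Talagrand small-temperature condition $\theta_p(a)^2<1/8$ layer by layer. The mild care point is ensuring that $q_p>0$ and $\lambda_p>0$ so that the ratios defining $a_p$ make sense; both are handled as above.
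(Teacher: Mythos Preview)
Your proof is correct and follows essentially the same route as the paper: choose $a$ satisfying \eqref{ceA}, use the identity $(Mq)_p=2\,q_p\,\theta_p(a)^2$ to convert \eqref{TalaDBM} into $\theta_p(a)^2<1/8$ and \eqref{RScons2} into the SK consistency equation, then invoke Talagrand's result \eqref{TalaSK} to verify \eqref{hypSKRS} and apply Theorem~\ref{teo:RS}. Your additional care in checking $q_p>0$ so that the ratios defining $a_p$ make sense is a detail the paper leaves implicit.
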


\begin{proof}
Let $q$ be a solution of \eqref{RScons2} satisfying \eqref{TalaDBM}. Let $a\in\R_+^{K-1}$ verifying \eqref{ceA}, so that the relation \eqref{theta-M} holds true. Then \eqref{TalaDBM} and \eqref{RScons2} rewrite respectively as:
\be \begin{cases}
\,\theta_p(a)^2 \,<\, \dfrac{1}{8} \\[6pt]
\,q_p \,=\, \E\tanh^{2}\left(z\,\sqrt{2\,q_p\,\theta_p(a)^2}+h^{(p)}\,\right)
\end{cases}\ee
for every $p=1,\dots,K\,$.
By Talagrand's result \eqref{TalaSK}, this entails
\be 
\pSK\big(\theta_p(a),h^{(p)}\big)\,=\,\pSKRS\big(q_p\,;\theta_p(a),h^{(p)}\big)
\ee
for every $p=1,\dots,K\,$.
Therefore by Theorem \ref{teo:RS}, 
\be 
\PDBM(a;\,\beta,\lambda,h) \,=\, \pRS(q;\,\beta,\lambda,h)
\ee
and the bound \eqref{boundDBM-RS} holds true.
\end{proof}

A complete characterization of the SK replica symmetric region where equality is achieved in \eqref{boundSK-RS} is still missing (see nevertheless \cite{Gut,Tala,JT}). A necessary condition is the Almeida-Thouless condition \cite{ton}:
\be\label{ATcond}
\beta^2\;\E\cosh^{-4}\left(z\,\sqrt{2\,q\,\beta^2}\,+h\right) \,\leq\, \frac{1}{2}
\ee
where $q$ is a solution of the consistency equation \eqref{ceSK-RS}.

However if we take $h$ Gaussian centered r.v. with variance $v>0$, it was recently proved \cite{Chen_private} that the Almeida-Thouless condition is also sufficient to have equality in \eqref{boundSK-RS}.
Precisely:
\be\label{ATGauss}
\pSK(\beta,h) = \pSKRS(q;\,\beta,h)
\ \Leftrightarrow\ 
\begin{cases}
\,\beta^2\;\E\cosh^{-4}\left(z\,\sqrt{2\,q\,\beta^2+v}\,\right) \,\leq\, \dfrac{1}{2} \\[8pt]
\,q \text{ is the (unique) solution of }\eqref{ceSK-RSGauss}
\end{cases}\;.
\ee

\begin{corollary}
Assume $h^{(p)}$, $p=1,\dots,K$ centered Gaussian variables of variance $v_p>0$ respectively.
Let $\beta,\,\lambda,\,v$ such that the (unique) solution $q$ of the replica symmetric consistency equation \eqref{RSconsGauss} satisfies the inequalities
\be \label{ATDBM}
(M q)_p\,\ \E \cosh^{-4}\left(z\,\sqrt{(M q)_p+v_p}\,\right) \,\leq\, q_p \qquad \forall\,p=1,\dots,K \;.
\ee
Then the replica symmetric bound \eqref{boundDBM-RS} holds true.
\end{corollary}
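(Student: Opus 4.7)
The plan is to reduce the corollary to the SK Almeida--Thouless characterization \eqref{ATGauss} layer by layer, by using the reformulation \eqref{RSconsq-a} of the DBM consistency equation that already appeared in the proof of Theorem~\ref{teo:unique}.

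First I would invoke Theorem~\ref{teo:unique} to produce the unique solution $q\in[0,1]^K$ of \eqref{RSconsGauss}, together with the auxiliary vector $a\in\R_+^{K-1}$ given by the equivalent system \eqref{RSconsq-a}. By construction, $(q,a)$ satisfies the coupling relation \eqref{ceA}, so the identity \eqref{theta-M} yields $(Mq)_p=2\,q_p\,\theta_p(a)^2$ for every $p=1,\dots,K$. In particular the first line of \eqref{RSconsq-a} is precisely the SK replica symmetric consistency equation \eqref{ceSK-RSGauss} with $\beta=\theta_p(a)$ and Gaussian field of variance $v_p$, and by the Latala--Guerra part of Lemma~\ref{lem:monotoniaQ} its solution is the unique $q_p=\qSKRS(\theta_p(a),v_p)$.

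Next I would translate the DBM Almeida--Thouless hypothesis \eqref{ATDBM} through the identity \eqref{theta-M}: dividing by $q_p>0$ (note that $\qSKRS(\theta_p(a),v_p)>0$ because $v_p>0$), \eqref{ATDBM} becomes
\be
\theta_p(a)^2\;\E\cosh^{-4}\!\left(z\,\sqrt{2\,q_p\,\theta_p(a)^2+v_p}\,\right)\,\leq\,\frac{1}{2} \qquad\forall\,p=1,\dots,K,
\ee
which is exactly the Almeida--Thouless condition appearing on the right-hand side of \eqref{ATGauss} for the SK model with parameters $(\theta_p(a),h^{(p)})$. Since the consistency equation part of \eqref{ATGauss} is also satisfied by $q_p$, Chen's result grants
\be
\pSK\!\big(\theta_p(a),h^{(p)}\big)\,=\,\pSKRS\!\big(q_p;\,\theta_p(a),h^{(p)}\big) \qquad\forall\,p=1,\dots,K.
\ee

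This is precisely hypothesis \eqref{hypSKRS} of Theorem~\ref{teo:RS} applied to the pair $(q,a)$ related by \eqref{ceA}, so the theorem directly gives the replica symmetric bound \eqref{boundDBM-RS}, concluding the proof. There is no real technical obstacle: the only nontrivial ingredient is the per-layer transfer of information, which is guaranteed by the structural identity \eqref{theta-M} coming from the construction of the auxiliary variables $a$ in the proof of Theorem~\ref{teo:unique}. The use of positivity of the variances $v_p$ is what ensures $q_p>0$ and thus allows us to divide by $q_p$ when matching \eqref{ATDBM} with the SK Almeida--Thouless inequality.
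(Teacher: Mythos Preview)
Your proposal is correct and follows essentially the same route as the paper: introduce the auxiliary $a$ satisfying \eqref{ceA}, use the identity \eqref{theta-M} to rewrite both \eqref{RSconsGauss} and \eqref{ATDBM} as the layer-wise SK consistency equation and Almeida--Thouless condition, apply Chen's characterization \eqref{ATGauss}, and conclude via Theorem~\ref{teo:RS}. Your write-up is in fact slightly more careful than the paper's, since you explicitly justify the existence of $a\in\R_+^{K-1}$ via the construction in the proof of Theorem~\ref{teo:unique} and the strict positivity $q_p>0$ (needed to divide), whereas the paper simply asserts ``let $a\in\R_+^{K-1}$ verifying \eqref{ceA}''.
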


\begin{proof}
Let $q$ be the unique solution of \eqref{RSconsGauss}.
Let $a\in\R_+^{K-1}$ verifying \eqref{ceA}, so that the relation \eqref{theta-M} holds true.
Then \eqref{ATDBM} and \eqref{RSconsGauss} rewrite respectively as:
\be \begin{cases}
\theta_p(a)^2\;\E\cosh^{-4}\left(z\,\sqrt{2\,q_p\,\theta_p(a)^2+v_p}\,\right) \,\leq\, \dfrac{1}{2} \\[8pt]
q_p \,=\, \E\tanh^{2}\left(z\,\sqrt{2\,q_p\,\theta_p(a)^2+v_p}\,\right)
\end{cases}\ee
for every $p=1,\dots,K\,$.
By Chen's result \eqref{ATGauss}, this entails
\be 
\pSK\big(\theta_p(a),h^{(p)}\big)\,=\,\pSKRS\big(q_p\,;\theta_p(a),h^{(p)}\big)
\ee
for every $p=1,\dots,K\,$.
Therefore by Theorem \ref{teo:RS}, 
\be 
\PDBM(a;\,\beta,\lambda,h) \,=\, \pRS(q;\,\beta,\lambda,h)
\ee
and the bound \eqref{boundDBM-RS} holds true.
\end{proof}


\appendix
\section{Matching polynomials} \label{sec:appendix}
In this Appendix we give some properties of the polynomials $\DDelta_p(x,t)$ introduced by Definition \ref{def:Liebpoly} and characterizing the annealed region of the DBM. 
In particular we are interested in the location of the zeros of $\DDelta_p$, namely the points $x\in\C$ such that $\DDelta_p(x,t)=0\,$.

Theorem \ref{thm:HL} and Corollary \ref{cor:HL} are due to Heilmann and Lieb \cite{HL} and show that the zeros are real and have an interlacing proprety.
Proposition \ref{prop:zeros} and Corollary \ref{cor:zeros}, by using these results, contribute to the proof of Proposition \ref{annp}. Precisely we show that the zeros of $\DDelta_K$ lie in the interval $(-\rho,\rho)$ if and only if all the polynomials $\DDelta_p$ for $p\leq K$ are positive at $x=\rho\,$.

\begin{theorem}[Heilmann-Lieb \cite{HL}] \label{thm:HL}
Let $\tp>0$ for all $p=1,\dots,K-1\,$. Then for every $p=1,\dots,K$
\begin{itemize}
\item[i)] the zeros of $\DDelta_p$ are real and simple;
\item[ii)] if $p\geq1$, the zeros of $\DDelta_p$ ``interlace'' with those of $\DDelta_{p-1}$. Namely, denoting by $x_1^{(p-1)}<\dots<x_{p-1}^{(p-1)}$ the zeros of $\DDelta_{p-1}$ and by $x_1^{(p)}<\dots<x_{p}^{(p)}$ the zeros of $\DDelta_p$, we have:
\be \label{eq:interlace}
x_1^{(p)} < x_1^{(p-1)} < x_2^{(p)} < x_2^{(p-1)} < \dots < x_{p-1}^{(p)} < x_{p-1}^{(p-1)} < x_p^{(p)}\ .
\ee
\end{itemize}
\end{theorem}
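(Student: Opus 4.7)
The plan is to prove both items simultaneously by strong induction on $p$, using the three-term recursion \eqref{eq:Liebpoly} as the main engine.

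First I would observe that the recursion, together with $\Delta_0=1$ and $\Delta_1=x$, makes $\Delta_p$ a monic polynomial of degree $p$; this is immediate by induction because $x\,\Delta_p$ has degree $p+1$ with leading coefficient $1$, while $t_p\,\Delta_{p-1}$ has degree only $p-1$. Thus $\Delta_p(x)\to+\infty$ as $x\to+\infty$ and $\mathrm{sign}\,\Delta_p(-\infty)=(-1)^p$. The base case is easy: $\Delta_0=1$ has no zeros, $\Delta_1=x$ has the single zero $0$, and one checks directly that $\Delta_2(x)=x^2-t_1$ has real simple zeros $\pm\sqrt{t_1}$ which interlace with $0$.

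For the inductive step, assume both $\Delta_{p-1}$ and $\Delta_p$ have real simple zeros, denoted $x_1^{(p-1)}<\dots<x_{p-1}^{(p-1)}$ and $x_1^{(p)}<\dots<x_p^{(p)}$, with the interlacing \eqref{eq:interlace}. Evaluating the recursion at the zeros of $\Delta_p$ gives the key identity
\begin{equation*}
\Delta_{p+1}\bigl(x_i^{(p)},t\bigr) \,=\, -\,t_p\,\Delta_{p-1}\bigl(x_i^{(p)},t\bigr)\quad\text{for }i=1,\dots,p.
\end{equation*}
By the interlacing hypothesis, exactly one zero of $\Delta_{p-1}$ lies in each open interval $(x_i^{(p)},x_{i+1}^{(p)})$ and none lies outside $[x_1^{(p-1)},x_{p-1}^{(p-1)}]$, so $\Delta_{p-1}$ (being monic and real-rooted) takes alternating signs at $x_1^{(p)},\dots,x_p^{(p)}$, with $\Delta_{p-1}(x_p^{(p)})>0$. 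Since $t_p>0$, the values $\Delta_{p+1}(x_i^{(p)},t)$ also alternate in sign, producing at least one zero of $\Delta_{p+1}$ in each of the $p-1$ intervals $(x_i^{(p)},x_{i+1}^{(p)})$.

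It remains to locate one zero to the right of $x_p^{(p)}$ and one to the left of $x_1^{(p)}$. On the right: $\Delta_{p+1}(x_p^{(p)})=-t_p\Delta_{p-1}(x_p^{(p)})<0$, while $\Delta_{p+1}(x)\to+\infty$ as $x\to+\infty$, so a zero exists in $(x_p^{(p)},+\infty)$. On the left: $\mathrm{sign}\,\Delta_{p-1}(x_1^{(p)})=(-1)^{p-1}$ by the alternating pattern (working backward from $x_p^{(p)}$), so $\mathrm{sign}\,\Delta_{p+1}(x_1^{(p)})=(-1)^p$, which matches the sign of $\Delta_{p+1}(-\infty)=(-1)^{p+1}\cdot\infty$ being the opposite; hence a zero exists in $(-\infty,x_1^{(p)})$. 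This accounts for $1+(p-1)+1=p+1$ distinct real zeros, which is exactly the degree of $\Delta_{p+1}$, so all zeros of $\Delta_{p+1}$ are real and simple, and by their placement they interlace with those of $\Delta_p$, closing the induction.

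The main obstacle is bookkeeping the sign pattern of $\Delta_{p-1}$ at the zeros of $\Delta_p$ and matching it against the signs of $\Delta_{p+1}$ at $\pm\infty$ to ensure all $p+1$ zeros are accounted for; once the monic character and the alternation are set up, no further analytic input is needed since everything follows from the intermediate value theorem applied to the recursion.
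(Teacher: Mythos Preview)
Your proof is correct and follows essentially the same route as the paper's: strong induction using the three-term recursion to transfer the sign alternation of $\Delta_{p-1}$ at the zeros of $\Delta_p$ over to $\Delta_{p+1}$, then picking up the two outermost zeros from the behaviour at $\pm\infty$. The only cosmetic difference is that you make the monic character and the explicit sign pattern $(-1)^{p-1}$ at $x_1^{(p)}$ explicit, whereas the paper simply notes that $\Delta_{p+1}$ and $\Delta_{p-1}$ share the same sign at $\pm\infty$ (their degrees differ by~$2$); both arguments are equivalent.
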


\begin{proof}
The statement is trivially true for $p=0$ and $p=1$. Consider $p\geq1$, assume the statement holds true for $p-1$ and $p$ and prove it for $p+1$.
By induction hypothesis the zeros of $\DDelta_{p}$ and those of $\DDelta_{p-1}$ are real and simple and they are interlaced, namely \eqref{eq:interlace} holds true.

Since the zeros of $\DDelta_{p-1}$ are simple, $\DDelta_{p-1}$ changes its sign exactly at every $x_1^{(p-1)},\dots,x_{p-1}^{(p-1)}$. By \eqref{eq:interlace}, it follows that $\DDelta_{p-1}$ has alternating signs at the points $x_1^{(p)},\dots,x_{p}^{(p)}$.
Therefore also $\DDelta_{p+1}$ has alternating signs at the points $x_1^{(p)},\dots,x_{p}^{(p)}\,$, indeed by the recursion relation \eqref{eq:Liebpoly}
\be \label{eq:zeros}
\DDelta_{p+1}\!\big(x_k^{(p)},t\big) \,=\, -\,\underbrace{\tp}_{>0}\,\DDelta_{p-1}\!\big(x_k^{(p)},t\big)
\ee
for every $k=1,\dots,p$.
As a consequence $\DDelta_{p+1}$ has (at least) one zero in each interval $\big(x_k^{(p)},\,x_{k+1}^{(p)}\big)$ for $k=1,\dots,p-1$.
Moreover, since $\DDelta_{p+1}$ and $\DDelta_{p-1}$ share the same sign as $x\to\infty$ and as $x\to-\infty\,$, \eqref{eq:zeros} implies that $\DDelta_{p+1}$ has (at least) one zero in $\big(x_p^{(p)},\,\infty\big)$ and (at least) one zero in $\big(-\infty,\,x_{1}^{(p)}\big)\,$.
Since the zeros of $\DDelta_{p+1}$ are exactly $p+1$, the thesis follows.
\end{proof}

Theorem \ref{thm:HL} can be extended to the case of non-negative coefficients.

\begin{corollary}[Heilmann-Lieb \cite{HL}] \label{cor:HL}
Let $\tp\geq0$ for all $p=1,\dots,K-1\,$. Then for every $p=1,\dots,K$
\begin{itemize}
\item[i)] the zeros of $\DDelta_p$ are real;
\item[ii)] if $p\geq1$, the zeros of $\DDelta_p$ ``weakly interlace'' with those of $\DDelta_{p-1}$. Namely, denoting by $x_1^{(p-1)}\leq\dots\leq x_{p-1}^{(p-1)}$ the zeros of $\DDelta_{p-1}$ and by $x_1^{(p)}\leq \dots\leq x_{p}^{(p)}$ the zeros of $\DDelta_p$ repeated according to their multiplicity, we have:
\be 
x_1^{(p)} \leq x_1^{(p-1)} \leq x_2^{(p)} \leq x_2^{(p-1)} \leq \dots \leq x_{p-1}^{(p)} \leq x_{p-1}^{(p-1)} \leq x_p^{(p)}\ .
\ee
\end{itemize}
\end{corollary}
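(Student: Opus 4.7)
The plan is to obtain the non-negative case as a limit of the strictly positive case already established in Theorem \ref{thm:HL}. For a fixed $t=(t_1,\dots,t_{K-1})\in[0,\infty)^{K-1}$ and $\varepsilon>0$, define the perturbed vector $t^{(\varepsilon)}\equiv(t_1+\varepsilon,\dots,t_{K-1}+\varepsilon)\in(0,\infty)^{K-1}$. Since $t^{(\varepsilon)}$ has strictly positive entries, Theorem \ref{thm:HL} applies: for every $p=1,\dots,K$, the polynomial $x\mapsto\DDelta_p(x,t^{(\varepsilon)})$ has $p$ simple real zeros $x_1^{(p)}(\varepsilon)<\dots<x_p^{(p)}(\varepsilon)$, and these strictly interlace with the $p-1$ zeros of $\DDelta_{p-1}(\,\cdot\,,t^{(\varepsilon)})$.

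Next I would pass to the limit $\varepsilon\to0^+$ using continuity of the roots with respect to the coefficients. From the recursion \eqref{eq:Liebpoly} one sees that $\DDelta_p(x,t^{(\varepsilon)})$ is a monic polynomial of degree $p$ in $x$ whose coefficients are polynomial (hence continuous) functions of $\varepsilon$. A standard fact (proved via Rouch\'e's theorem or via the resolvent of the companion matrix) is that the multiset of complex roots of a monic polynomial depends continuously on its coefficients. Applied along the strictly interlaced sequences $\big(x_k^{(p)}(\varepsilon)\big)_{\varepsilon>0}$, which are bounded (for example by $1+\max_p|a_p|$ where $a_p$ are the coefficients, or simply because all these roots lie in a fixed compact set for small $\varepsilon$), one extracts convergent subsequences $x_k^{(p)}(\varepsilon)\to x_k^{(p)}\in\R$ as $\varepsilon\to0^+$, and the limits are precisely the zeros of $\DDelta_p(\,\cdot\,,t)$ counted with multiplicity, because the limiting polynomial has degree $p$ and $p$ roots have been accounted for. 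In particular all zeros of $\DDelta_p(\,\cdot\,,t)$ are real, proving \textit{i)}.

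For \textit{ii)}, preservation of the order under limits of real sequences yields
\be
x_1^{(p)} \leq x_1^{(p-1)} \leq x_2^{(p)} \leq x_2^{(p-1)} \leq \dots \leq x_{p-1}^{(p)} \leq x_{p-1}^{(p-1)} \leq x_p^{(p)} \;,
\ee
which is exactly the weak interlacing claimed; strict inequalities in the perturbed system turn into weak inequalities in the limit, with equality corresponding to roots that collide as $\varepsilon\to0^+$.

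The main point requiring care is the continuity-of-roots step: one must make sure that \emph{each} convergent subsequence of the $p$ interlaced roots contributes a distinct root in the limit (with multiplicity), so that the limits really exhaust the zero set of $\DDelta_p(\,\cdot\,,t)$. This is guaranteed because $\DDelta_p(\,\cdot\,,t^{(\varepsilon)})\to\DDelta_p(\,\cdot\,,t)$ uniformly on compact sets together with degree preservation (the leading coefficient stays equal to $1$), so the $p$ perturbed roots converge, as a multiset, to the $p$ roots of the limiting polynomial. No other subtlety is needed: positivity of $t_p$ is used only in Theorem \ref{thm:HL}, while the passage to $t_p\geq0$ is purely a continuity argument.
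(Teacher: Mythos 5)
Your proposal is correct and is exactly the paper's argument: the paper proves Corollary \ref{cor:HL} with the one-line statement ``It follows from Theorem \ref{thm:HL} by continuity,'' and you have simply spelled out that continuity argument (perturb $t$ to strictly positive entries, apply Theorem \ref{thm:HL}, and pass to the limit using continuous dependence of the root multiset on the coefficients of a monic polynomial). No gaps; the details you supply about degree preservation and boundedness of the roots are the right ones to make the limit argument rigorous.
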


\begin{proof}
It follows from Theorem \ref{thm:HL} by continuity.
\end{proof}

\begin{remark} \label{symmzeros}
The zeros of $\DDelta_p$ are symmetric with respect to $x=0$. Indeed
\be 
\DDelta_p(x,t) \,=\, (-1)^p\,\DDelta_p(-x,t)
\ee
because both polynomials verify the same recursion relation \eqref{eq:Liebpoly}.
\end{remark}

\begin{proposition} \label{prop:zeros}
Let $\tp>0$ for all $p=1,\dots,K-1\,$. Then for every $\rho>0$ the followings are equivalent:
\begin{itemize}
\item[i)] the zeros of $\DDelta_K$ are contained in $(-\rho,\rho)\,$;
\item[ii)] the zeros of $\DDelta_p$ are contained in $(-\rho,\rho)$ for every $p=1,\dots,K\,$;
\item[iii)] $\DDelta_p(\rho,t)>0$ for every $p\leq K$ such that $p\equiv_{\textrm{mod}2}K\,$;
\item[iv)] $\DDelta_p(\rho,t)>0$ for every $p=1,\dots,K\,$.
\end{itemize}
\end{proposition}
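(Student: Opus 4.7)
The plan is to establish the cycle (i) $\Leftrightarrow$ (ii), (iii) $\Leftrightarrow$ (iv), and (ii) $\Leftrightarrow$ (iv), which together yield the four-way equivalence. The implications (ii) $\Rightarrow$ (i) and (iv) $\Rightarrow$ (iii) are tautological; the implication (ii) $\Rightarrow$ (iv) follows by noting that each $\DDelta_p$ is monic of degree $p$ (by induction on the recursion \eqref{eq:Liebpoly}) with all zeros in $(-\rho,\rho)$, so $\DDelta_p(\rho,t)>0$ from its $+\infty$ behaviour as $x\to+\infty$.

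For (i) $\Rightarrow$ (ii), I would iterate Theorem \ref{thm:HL}(ii) downward: if all zeros of $\DDelta_p$ lie in $(-\rho,\rho)$, then by the strict interlacing the zeros of $\DDelta_{p-1}$ lie strictly between the extreme zeros of $\DDelta_p$, hence also in $(-\rho,\rho)$. Starting from the hypothesis at $p=K$ and iterating down to $p=1$ settles (ii).

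For (iii) $\Rightarrow$ (iv), I would rearrange the recursion \eqref{eq:Liebpoly} as
\[
\rho\, \DDelta_p(\rho,t) \,=\, \DDelta_{p+1}(\rho,t) \,+\, t_p\, \DDelta_{p-1}(\rho,t).
\]
For any $p\in\{1,\dots,K-1\}$ of parity opposite to $K$, both indices $p+1$ and $p-1$ have the same parity as $K$ and lie in $\{0,\dots,K\}$; with $\DDelta_0\equiv1$ handling the boundary case $p=1$, (iii) ensures that both terms on the right are strictly positive, and since $\rho>0$ and $t_p>0$ we obtain $\DDelta_p(\rho,t)>0$. Combined with (iii), this yields (iv).

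The central difficulty is (iv) $\Rightarrow$ (ii), which I would prove by induction on $p$ that the largest zero $\rho_p$ of $\DDelta_p$ satisfies $\rho_p<\rho$; by the symmetry of the zeros about the origin (Remark \ref{symmzeros}), this yields (ii). The base $\rho_1=0<\rho$ is immediate. For the inductive step, assume $\rho_p<\rho$; by Theorem \ref{thm:HL}(ii), the second-largest zero $x_p^{(p+1)}$ of $\DDelta_{p+1}$ satisfies $x_p^{(p+1)}<\rho_p<\rho$. If $\rho_{p+1}\geq\rho$ held, then $\rho$ would lie in the interval $(x_p^{(p+1)},\rho_{p+1}]$, on which $\DDelta_{p+1}$ is non-positive (strictly negative just to the left of $\rho_{p+1}$, since it is monic and must jump from negative to positive at its largest zero), contradicting (iv). The heart of the argument is therefore the Heilmann--Lieb interlacing: without it, the pointwise positivity $\DDelta_{p+1}(\rho,t)>0$ alone would not preclude zeros of $\DDelta_{p+1}$ beyond $\rho$.
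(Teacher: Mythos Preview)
Your proof is correct. The implications you handle as trivial or via a one-line sign argument match the paper's treatment, and your use of the recursion for (iii) $\Rightarrow$ (iv) is exactly the paper's argument.

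The one place where your organisation differs from the paper is the hard step. The paper closes the cycle by proving (iv) $\Rightarrow$ (i) via a \emph{downward} contradiction: assuming $\DDelta_K$ has a zero $\geq\rho$, it uses simplicity of the zeros and the $+\infty$ behaviour to force a \emph{second} zero of $\DDelta_K$ in $(\rho,\infty)$, then interlacing plus $\DDelta_{K-1}(\rho,t)>0$ to force two zeros of $\DDelta_{K-1}$ in $(\rho,\infty)$, and so on down to $\DDelta_2$, which only has one positive zero --- contradiction. You instead prove (iv) $\Rightarrow$ (ii) by an \emph{upward} induction on $p$, showing $\rho_p<\rho$ directly: interlacing pins the second-largest zero of $\DDelta_{p+1}$ below $\rho_p<\rho$, and then the sign of the monic $\DDelta_{p+1}$ on $(x_p^{(p+1)},\rho_{p+1}]$ together with $\DDelta_{p+1}(\rho,t)>0$ forces $\rho_{p+1}<\rho$. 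Both arguments are driven by the Heilmann--Lieb interlacing and the monic sign pattern; your forward induction is slightly more direct (it avoids the contradiction framing and the auxiliary ``two zeros past $\rho$'' claim), while the paper's version makes the role of $\DDelta_2$ as the obstruction more visible.
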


\begin{proof}
i$\Rightarrow$ii. This is a consequence of Theorem \ref{thm:HL}.

ii$\Rightarrow$iii. Trivial since $\DDelta_p(x,t)\to\infty$ as $x\to\infty$ for every $p\geq1\,$.

iii$\Rightarrow$iv. From the recursion relation \eqref{eq:Liebpoly}, one sees that if $\DDelta_{p+1}(\rho,t)>0$ and $\DDelta_{p-1}(\rho,t)>0$ then also $\DDelta_{p}(\rho,t)>0\,$.

iv$\Rightarrow$i. By contradiction assume that $\DDelta_p(\rho,t)>0$ for every $p=1,\dots,K$ and not all the zeros of $\DDelta_K$ are contained in $(-\rho,\rho)$.\\
\textit{Claim:} $\DDelta_p$ has at least two zeros in $(\rho,\infty)$ for every $p=2,\dots,K\,$.\\
We are going to prove the claim by induction. It will contradict the fact that $\DDelta_2$ has only one positive zero.\\
Let's start from $p=K$. By hypothesis $\DDelta_K(\rho,t)>0$ and
$\DDelta_K$ has a zero $x_0^{(K)}\in(\rho,\infty)\,$.
Theorem \ref{thm:HL} guarantees that $\DDelta_K$ changes its sign at $x=x_0^{(K)}$ (because every zero is simple).
On the other hand we know that $\DDelta_K(x,t)\to\infty$ as $x\to\infty$.
Therefore $\DDelta_K$ has (at least) another zero $x_1^{(K)}\in(\rho,\infty)\,$, $x_1^{(K)}\neq x_0^{(K)}\,$. This proves the claim for $p=K\,$.\\
Now let $p\leq K$, assume the claim for $p$ and prove it for $p-1\,$. By induction hypothesis 
$\DDelta_p$ has two zeros $x_0^{(p)},x_1^{(p)}\in(\rho,\infty)\,$, $x_1^{(p)}\neq x_0^{(p)}\,$.
By Theorem \ref{thm:HL} it follows that
$\DDelta_{p-1}$ has a zero $x_0^{(p-1)}\in(\rho,\infty)$ (interlacing of the zeros).
Since by hypothesis $\DDelta_{p-1}(\rho,t)>0$ and $\DDelta_{p-1}(x,t)\to\infty$ as $x\to\infty$, it follows that
$\DDelta_{p-1}$ has another zero $x_1^{(p-1)}\in(\rho,\infty)\,$, $x_1^{(p-1)}\neq x_0^{(p-1)}$.
\end{proof}

Also Proposition \ref{prop:zeros} extends to the case of non-negative coefficients.
\begin{corollary} \label{cor:zeros}
Let $\tp\geq0$ for all $p=1,\dots,K-1\,$. Then for every $\rho>0$ the followings are equivalent:
\begin{itemize}
\item[i)] the zeros of $\DDelta_K$ are contained in $(-\rho,\rho)\,$;
\item[ii)] the zeros of $\DDelta_p$ are contained in $(-\rho,\rho)$ for every $p=1,\dots,K\,$;
\item[iii)] $\DDelta_p(\rho,t)>0$ for every $p\leq K$ such that $p\equiv_{\textrm{mod}2}K\,$;
\item[iv)] $\DDelta_p(\rho,t)>0$ for every $p=1,\dots,K\,$.
\end{itemize}
\end{corollary}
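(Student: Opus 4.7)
The plan is to establish the cyclic chain i)$\Rightarrow$ii)$\Rightarrow$iii)$\Rightarrow$iv)$\Rightarrow$i), mirroring Proposition \ref{prop:zeros} and closing the gaps that open up when some $t_p$'s are allowed to vanish.

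For the implication i)$\Rightarrow$ii), I would invoke the weak interlacing of Corollary \ref{cor:HL}: from $x_1^{(K)}\le x_1^{(K-1)}\le\cdots\le x_{K-1}^{(K-1)}\le x_K^{(K)}$, the extreme zeros of $\DDelta_{K-1}$ are bracketed by those of $\DDelta_K$ and hence lie in $(-\rho,\rho)$; a downward induction then places the zeros of every $\DDelta_p$ inside $(-\rho,\rho)$. The steps ii)$\Rightarrow$iii) and iii)$\Rightarrow$iv) go through unchanged from Proposition \ref{prop:zeros}, since the first uses only that $\DDelta_p(x,t)\to+\infty$ as $x\to+\infty$ and the second uses only the recursion \eqref{eq:Liebpoly}, both of which remain valid for $t_p\ge 0$.

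The substantive step is iv)$\Rightarrow$i): the proof of Proposition \ref{prop:zeros} exploited simplicity of the zeros (strict interlacing), a property that can break down when some $t_p$'s vanish. I would handle this by perturbation, picking a sequence $t^{(n)}\in(0,\infty)^{K-1}$ with $t^{(n)}\to t$. By joint continuity of the polynomials in their coefficients and the strict inequalities in iv), $\DDelta_p(\rho,t^{(n)})>0$ for every $p=1,\dots,K$ once $n$ is large, so Proposition \ref{prop:zeros} applies at each $t^{(n)}$ and places the zeros of $\DDelta_K(\cdot,t^{(n)})$ in $(-\rho,\rho)$. Continuous dependence of polynomial roots on coefficients then yields zeros of $\DDelta_K(\cdot,t)$ in the closed interval $[-\rho,\rho]$.

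The hard part will be ruling out $x=\pm\rho$ as limit zeros. For $x=\rho$ this is immediate from hypothesis iv), which already gives $\DDelta_K(\rho,t)>0$; for $x=-\rho$ I would use Remark \ref{symmzeros}, whose identity $\DDelta_K(x,t)=(-1)^K\DDelta_K(-x,t)$ remains valid for all $t\ge 0$ and forces $\DDelta_K(-\rho,t)=\pm\DDelta_K(\rho,t)\ne 0$. This closes the interval to $(-\rho,\rho)$ and completes the cycle.
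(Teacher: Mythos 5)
Your proof is correct and follows essentially the same route as the paper, which disposes of i)$\Rightarrow$ii)$\Rightarrow$iii)$\Rightarrow$iv) ``as before'' and of iv)$\Rightarrow$i) ``by continuity'' via perturbation of $t$ into $(0,\infty)^{K-1}$. Your write-up is in fact slightly more careful than the paper's: you make explicit the exclusion of the endpoints $x=\pm\rho$ (using hypothesis iv) at $x=\rho$ and the parity identity of Remark \ref{symmzeros} at $x=-\rho$), a point the paper's one-line continuity argument leaves implicit.
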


\begin{proof}
Implications i$\Rightarrow$ii$\Rightarrow$iii$\Rightarrow$iv are proven as before.
iv$\Rightarrow$i follows from Proposition \ref{prop:zeros} by continuity.
\end{proof}

\section*{Acknowledgements}
The authors thank Adriano Barra, Wei-Kuo Chen, Francesco Guerra and Daniele Tantari for interesting discussions.
D.A. is grateful to Alberto Viscardi for his contribution to Proposition \ref{prop:rhobound}.
P.C. was partially supported by PRIN project Statistical Mechanics and Complexity (2015K7KK8L).
D.A. and E.M. were partially supported by Progetto Almaidea 2018.

\end{document}